\documentclass[preprint,12pt]{elsarticle}
\usepackage{etoolbox}
\makeatletter
\def\ps@pprintTitle{%
	\let\@oddhead\@empty
	\let\@evenhead\@empty
	\def\@oddfoot{\reset@font\hfil\thepage\hfil}
	\let\@evenfoot\@oddfoot
}
\makeatother
\usepackage{graphics}
\usepackage[table,xcdraw]{xcolor}
\usepackage{amsmath, amsthm, amssymb}
\usepackage{natbib}
\usepackage[colorlinks=true]{hyperref}
\usepackage{lscape}
\usepackage[utf8]{inputenc} 
\usepackage{amsthm}

\usepackage{enumerate}
\usepackage{mathrsfs}
\usepackage{setspace}
\usepackage{multirow}
\usepackage{graphicx}
\usepackage{amsfonts}
\usepackage{verbatim}
\usepackage{amssymb}
\usepackage{amsthm}
\usepackage{diagbox}
\usepackage{booktabs}
\usepackage{array}
\usepackage{multirow,bigstrut,bigdelim}
\newtheorem{thm}{Theorem}[section]
\newtheorem{lem}{Lemma}[section]
\newtheorem{cor}{Corollary}[section]
\theoremstyle{plain}
\newtheorem{defi}{Definition}[section]
\newtheorem{ex}{Example}[section]

\newtheorem{ass}{Assumptions}[section]

\theoremstyle{remark}
\newtheorem{rem}{Remark}[section]

\numberwithin{equation}{section}
\usepackage[mathscr]{euscript}
\usepackage{geometry} 
\usepackage{graphicx,lscape} 
\usepackage{booktabs}
\usepackage{array}
\usepackage{paralist} 
\usepackage{verbatim}
\usepackage{subfig} 
\usepackage{orcidlink}

\usepackage[mathscr]{euscript}
\usepackage{geometry} 
\usepackage{graphicx,lscape} 
\usepackage{booktabs} 
\usepackage{bigstrut}
\usepackage{array} 
\usepackage{paralist} 
\usepackage{verbatim} 
\usepackage{subfig} 
\biboptions{comma,round,authoryear}

\newcommand{\E}{\mathbb{E}}

\newcommand{\R}{\mathbb{R}}

\biboptions{comma,round,authoryear}
   
\begin{document}
\begin{frontmatter}
	\title{\textbf{A New Bivariate Excess Wealth Transform Based on Conditional Random Variables}}
	
	\author{S. M. Sunoj \orcidlink{0000-0002-6227-1506}, Anjali Babu \orcidlink{0009-0000-3699-9942}\corref{cor1}}
	\ead{smsunoj@cusat.ac.in, anjalirose@cusat.ac.in}
	
	\cortext[cor1]{Corresponding author}
	
	\address{Department of Statistics, Cochin University of Science and Technology\\Cochin 682 022, Kerala, India}
	\begin{abstract}
		In this study, we introduce a new Bivariate Excess Wealth Transform (BEWT) to study excess wealth characteristics in bivariate settings. The proposed transform is constructed using conditional quantile functions and their corresponding conditional quantile densities. We establish fundamental properties of the proposed transform and derive some characteristic relationship for some well known bivariate lifetime models. Nonparametric weighted empirical estimation of BEWT has been proposed; theoretical and numerical validation of the method has also been considered in the study, along with a real data analysis based on the Eurosystem Household Finance and Consumption Survey.
	\end{abstract}
	
	\begin{keyword}
		Excess Wealth Transform \sep Quantile Function \sep Conditional quantile function \sep Weighted empirical estimation.
	\end{keyword}
\end{frontmatter}
\section{Introduction}

The comparison of variability and inequality among probability
distributions is an important problem in statistics, probability, and
their applications. Although classical measures such as variance
and standard deviation provide convenient numerical summaries of
dispersion, they may not adequately describe differences in the
distributional tails. This has motivated the development of
distribution-based measures and stochastic orders that provide a more
detailed characterization of variability and tail behaviour.  Among such measures, the excess wealth transform (EWT) is an important
distributional tool for studying upper-tail variability. The concept was
introduced independently by \cite{fernandez1998partial}, who referred
to it as the excess wealth transform, and by \cite{shaked1998two}, who considered the corresponding right-spread transform. The excess wealth transform has subsequently been studied in connection
with stochastic orders, variability comparisons, reliability theory,
actuarial science, and economic applications. In particular, its
interpretation in terms of the wealth held above a specified quantile
makes it naturally relevant to the study of income and wealth inequality.
For a continuous non-negative random variable $X$ with a cumulative distribution function (CDF), $F(x)$ and survival function (SF) $\bar{F}(x) = 1 - F(x)$ having finite mean, the excess wealth transform (EWT) is given by
\begin{equation}\label{1.1}
	W(u)
	=
	\int_{Q(u)}^{\infty}
	\bar{F}(x)
	dx,
	\qquad 0\le u \le1,
\end{equation}
where $Q(u) = \inf [x | F(x) \geq u]$ is the quantile function of $X$. Although a good amount of literature on EWT is on ordering of distributions, the interpretation of (1.1) in different contexts makes it a versatile tool in different disciplines. For example, in economics it is the additional wealth on top of the $u$th quantile of the richest $100(1 - u)\%$ of individuals in a population, while in finance it represents the expected
shortfall of the portfolio with loss $X$ and solvency capital requirement $Q(u)$, and for the re-insurer it is the expected claim above the retention level (payment made by reinsurer on any claim in excess of this level). \cite{kochar2007excess}, \cite{belzunce2015some} and \cite{belzunce2016sufficient} investigated the role of different stochastic ordering based on EWT in comparing random variables, particularly in the contexts of generalized order statistics and sample spacings.  The relevance of EWT to reliability theory has also been explored extensively through its connections with different ageing classes and the comparison of lifetime distributions. Notable contributions in this direction include those of \cite{kochar1997connections}, \cite{alharbi2005new}, \cite{kayid2009some} and \cite{kochar2013excess}, highlighting the usefulness of EWT-based concepts in assessing reliability and characterizing ageing properties.  The EWT has also found applications in finance and risk analysis, \cite{kochar2002total}, \cite{bertsimas2004shortfall} and \cite{sordo2009comparing} examined their relevance to concepts such as tail variability, value-at-risk, and expected shortfall.  Related applications in insurance and reinsurance have also been discussed by \cite{denuit1999lorenz}.

Several multivariate extensions of distributional transforms and
stochastic orders have been considered in the literature. In particular, \cite{fernandez2011new} have introduced a multivariate extension of the excess wealth function and its application in comparing multivariate distributions. For a bivariate vector $(X_1, X_2)$
in $R^2$ with joint probability density function (PDF) $f(x_1, x_2)$ and SF, $\bar{F}(x_1, x_2)$, and marginal
SFs $\bar{F}_{X_i}(x_i), i = 1, 2$, \cite{fernandez2011new} defined the bivariate EWT (BEWT) as

\begin{equation}
	W(u_1,u_2)
	=
	\int_{Q_{X_1}(u_1)}^{\infty}
	\bar{F}_{X_1}(t)
	S_{X_2|X_1 = t_1}(u_2)
	dt,
	\qquad 0\le u_1,u_2<1,
\end{equation}
where 
\[
S_{X_2|X_1 = t_1}(u_2) = \int_{Q_{21}^*(u_1,u_2)}^{\infty}
\bar{F}_{X_2|X_1=x_1}(t_2)dt_2
\]
with $Q_{21}^*(u_1,u_2) = \inf\left\{x_2|S\left(x_2| Q_{X_1}(u_1)\right) = u_2\right\}$, $S(x_2|x_1)$ is the conditional survival function and $Q_{X_1}(u_1)$ is the quantile function of $X_1$.

Despite the usefulness of the univariate EWT, many economic and
financial phenomena are inherently multivariate. Wealth and income, for example, are often accompanied by other financial characteristics such
as financial assets, liabilities, or household income components.
Moreover, economic variables may exhibit substantial dependence, so
that studying their marginal distributions separately may fail to capture important features of their joint behaviour. A natural
extension of the EWT should therefore incorporate information from the joint distribution while retaining the upper-tail interpretation of the univariate transform.
While extending unvariate definitions to higher dimensions, it can be accomplished
in different ways depending on the manner in which the joint distribution is characterized.
Although excess wealth transforms have been extensively studied in the univariate setting and multivariate extensions of related distributional transforms have been developed, a conditional quantile-based vector formulation specifically designed to measure bivariate upper-tail excess wealth has not been systematically developed. Motivated by this fact, in the present study we consider two conditional distributions to derive the proposed bivariate excess wealth transform.  An important feature of the proposed construction is that it preserves the upper-tail nature of the univariate EWT while incorporating conditional information. The vector formulation also allows the two
directions of conditional upper-tail behaviour to be studied
simultaneously.

We have organized the paper into five sections. After the introduction, in Section 2 the definition of the proposed BEWT and the formulas to evaluate it
are given with illustrations. Some fundamental properties of BEWT are presented in Section 3. In Section 4, we propose some characterizations for BEWT based on some important bivariate lifetime models, and in Section 5 we propose a weighted empirical distribution function based method of estimation of the BEWT. Section 6 is devoted for the validation of the proposed method using a simulation study.  Finally, in Section 7, the usefulness of the proposed measure is illustrated using a real data analysis.

\section{Bivariate excess wealth transform}
Let $(X_1,X_2)$ be a non-negative absolutely continuous random vector with joint distribution function $F(x_1, x_2)$ and marginal distribution functions $F_{X_1}(x_1)$ and $F_{X_2} (x_2)$. Further assume that $F(x_1, x_2)$ is strictly increasing in each of the arguments $x_1$ and $x_2$ and $E(X_1 X_2) < \infty$. With reference to the marginals $F_i (x_i)$ and the conditional distribution functions of $X_i$ given $X_j > x_j$ denoted by
\begin{equation}\label{conditional-survival}
	F_{ij}\left(x_i | x_j\right) = F_{ij}(x_i|X_j > x_j) = P[X_i \leq x_i | X_j > x_j], \; i,j = 1,2; i \neq j
\end{equation}
the quantile functions are defined by
\begin{equation}\label{2.1}
	Q_i(u_i) =  \inf\left\{x_i | F_i(x_i)\geq u_i\right\},
	\quad 0\leq u_i\leq 1
\end{equation}
\begin{equation}\label{2.2}
	Q_{12}(u_1,u_2) =  \inf\left\{x_1 | F_{12}(x_1 | Q_2(u_2))\geq u_1\right\}
\end{equation}
and
\begin{equation}\label{2.3}
	Q_{21}(u_1,u_2) =  \inf\left\{x_2 | F_{21}(x_2 | Q_1(u_1))\geq u_2\right\},
\end{equation}
such that the conditional quantile density functions $q_{12}(u_1, u_2) = 	\frac{\partial}{\partial u_1}	Q_{12}(u_1, u_2)$ and $q_{21}(u_1, u_2) = \frac{\partial}{\partial u_2}Q_{21}(u_1, u_2)$ exist. The pair $(Q_1 (u_1), Q_{21} (u_1, u_2))$ is called the bivariate quantile function of $(X_1,X_2)$. The pair $(Q_{12} (u_1, u_2), Q_{2} (u_2))$ is also an equivalent definition to bivariate quantile function. We refer to \cite{vineshkumar2019bivariate} and \cite{unnikrishnan2023properties} for the definition, properties, bivariate quantile models based on this framework, and their applications.

Unlike the definition of a bivariate quantile function as a pair of marginal quantile function and conditional quantile functions proposed by \cite{vineshkumar2019bivariate}, the bivariate probability structure can also be determined from a pair of conditional quantile measures. Based on this idea, recently, \cite{vineshkumar2025vector} proposed a vector-valued definition of the bivariate TTT transform and showed that it uniquely determines the joint survival function or distribution function, and studied various properties of their proposed measure.  Motivated by this, we propose a straightforward extension of EWT in \eqref{1.1} to the bivariate case as a vector-valued form, defined as follows. 
\begin{defi}\label{def21}
	Let $(X_1, X_2)$ be a non-negative, absolutely continuous random vector with joint density \(f(x_1,x_2)\) and marginal distribution functions \(F_1\) and \(F_2\). Assume that \(F_1\) and \(F_2\) are continuous and strictly increasing on their respective supports and that $ E(X_1)<\infty, \; E(X_2)<\infty$.  For \(u_1,u_2\in(0,1)\), let the marginal and conditional quantile functions are defined as in \eqref{2.1}, \eqref{2.2} and \eqref{2.3}. Then the \emph{Bivariate Excess Wealth Transform} (BEWT) of $(X_1, X_2)$ is defined as a pair
	\[		
	W(u_1,u_2) = \left(W_{12}(u_1,u_2), W_{21}(u_1,u_2)\right)
	\]
	where
	\begin{equation}\label{11}
		W_{12}(u_1,u_2) = \int_{Q_{12} (u_1, u_2)}^{\infty}{\bar{F}_{12}\left(x_1 | Q_2 (u_2)\right)dx_1}
	\end{equation}
	and 
	\begin{equation}\label{12}
		W_{12}(u_1,u_2) = \int_{Q_{21} (u_1, u_2)}^{\infty}{\bar{F}_{21}\left(x_2 | Q_1 (u_1)\right)dx_2},
	\end{equation}
	provided both the integrals are finite and $\bar{F}_{12}\left(x_1 | Q_2 (u_2)\right) = 1 - F_{12}\left(x_1|X_2 > Q_2(u_2)\right)$ and $\bar{F}_{21}\left(x_2 | Q_1 (u_1)\right) = 1 - F_{21}\left(x_2|X_1 > Q_1(u_1)\right)$ denote the conditional survival functions of $\left(X_1|X_2 > Q_2(u_2)\right)$ and $\left(X_2|X_1 > Q_1(u_1)\right)$ respectively.
\end{defi}
\begin{ex}\label{ex21}
	Consider the Gumbel bivariate exponential distribution with joint survival function
	\begin{equation}\label{26}
		\bar {F}(x_1, x_2) = \exp\{-\lambda_1x_1-\lambda_2x_2-\theta x_1x_2\}, \; x_1, x_2>0, \; \lambda_1, \lambda_2 > 0, \; 0 < \theta < \lambda_1 \lambda_2.
	\end{equation}
	with marginals $\bar{F}_i (x_i) = \exp \{-\lambda_i x_i\}$, and quantile functions $Q_i (u_i) = -\frac{1}{\lambda_i}\log (1 - u_i), \; i = 1,2$.  The conditional survival function is given by $\bar{F}_{ij}(x_i|x_j) = \exp\{-(\lambda_i + \theta x_j) x_i\}, \; i, j = 1, 2; \; i \neq j$, and subsequently 
	\[
	\bar{F}_{ij} \left(x_i | Q_j (u_j)\right) = \exp \Big\{-\left(\lambda_i - \frac{\theta}{\lambda_j} \log (1 - u_j)\right)x_i\Big\}, \; i, j = 1, 2; \; i \neq j.
	\]
	Setting $x_i = Q_{ij}(u_1, u_2)$, using the above conditional probability and applying the inverse transform,
	\[
	Q_{ij}(u_1, u_2) = \frac{-\lambda_j \log (1 - u_i)}{\lambda_1 \lambda_2 - \theta \log (1 - u_j)}, \; i, j = 1, 2; \; i \neq j.
	\]
	Using \eqref{11} and \eqref{12}, the BEWT of $(X_1, X_2)$ is obtained as
	\begin{equation}\label{13}
		\left(W_{12}(u_1,u_2), W_{21}(u_1,u_2)\right) = \left(\frac{\lambda_2 (1 - u_1)}{\lambda_1 \lambda_2 - \theta \log (1 - u_2)}, \frac{\lambda_1 (1 - u_2)}{\lambda_1 \lambda_2 - \theta \log (1 - u_1)}\right).
	\end{equation}
	When $\theta = 0$, the independence case in which Gumbel's bivariate exponential is a product of univariate exponentials, the BEWT becomes
	\[
	\left(W_{12}(u_1,u_2), W_{21}(u_1,u_2)\right) = \left(\frac{1 - u_1}{\lambda_1}, \frac{1 - u_2}{\lambda_2}\right).
	\]
	It can also observed from \eqref{13} that when there is a dependence ($\theta > 0$), $\frac{\partial W_{12}}{\partial u_2} < 0$ and $\frac{\partial W_{21}}{\partial u_1} < 0$.  That is, as the conditioning threshold of $ X_2 (X_1)$ increases, the BEWT of $X_1 (X_2)$ decreases.
\end{ex}
One can also define BEWT in Definition \ref{def21} to an alternative way by setting $x_i = Q_{ij}(u_1, u_2)$ and $\bar{F}_{ij} \left(x_i | Q_j (u_j)\right) = u_i, \; i, j = 1, 2; \; i \neq j$.  The BEWT is the pair that can be expressed in terms of conditional quantile density functions $q_{12}(u_1, u_2)$ and $q_{21}(u_1, u_2)$ as follows.

\begin{defi}\label{def22}
	Let \((X_1,X_2)\) be a non-negative absolutely continuous random vector with joint distribution function $F(x_1, x_2)$ and marginal distribution functions $F_{X_1}(x_1)$ and $F_{X_2} (x_2)$, and joint density $f(x_1, x_2)$.  Assume that \(F\) is strictly increasing in each argument and that \(E(X_1)<\infty\) and \(E(X_2)<\infty\). For every \(u_1,u_2\in[0,1)\), suppose that the conditional distributions \(X_1 | X_2>Q_2(u_2)\) and \(X_2 | X_1 > Q_1(u_1)\) are absolutely continuous and possess absolutely continuous quantile functions with quantile densities \(q_{12}(p_1 | u_2)\) and \(q_{21}(p_2 | u_1)\), respectively. Further, assume that \(P(X_1 > Q_1(u_1)) > 0\) and \(P(X_2 > Q_2(u_2)) > 0\).  Then the \emph{Bivariate Excess Wealth Transform} (BEWT) of $(X_1, X_2)$ is defined as a pair
	\[		
	W(u_1,u_2) =
	\left(W_{12}(u_1,u_2),
	W_{21}(u_1,u_2)
	\right)
	\]
	where
	\begin{equation}\label{2.4}
		W_{12}(u_1,u_2)
		=
		\int_{u_1}^{1}
		(1-p_1)\,
		q_{12}(p_1, u_2)\,
		dp_1,
		\qquad 0\le u_1, u_2<1,
	\end{equation}
	and	
	\begin{equation}\label{2.5}
		W_{21}(u_1, u_2)
		=
		\int_{u_2}^{1}
		(1-p_2)\,
		q_{21}(u_1, p_2)\,
		dp_2,
		\qquad 0\le u_1,u_2<1,
	\end{equation}
	provided both the integrals are finite.
\end{defi}
The first component $W_{12}(u_1,u_2)$ measures the upper-tail excess associated with the first variable at the level $u_1$, while incorporating conditional information from the second variable at the level $u_2$.  Similarly, $W_{21}(u_1,u_2)$ describes the corresponding upper-tail excess of the second variable conditioned on the first. Thus, unlike the univariate EWT, the proposed transform simultaneously reflects upper-tail behaviour and the dependence between the two variables.

The conditional formulation is particularly useful in economic applications. For example, when $(X_1,X_2)$ represents two related economic characteristics, $W_{12}(u_1,u_2)$ can be interpreted as the additional amount of the first characteristic above its $u_1$th quantile, taking into account the position of the second characteristic at the level $u_2$. The second component has an analogous interpretation, with the roles of the two variables reversed. This
provides a flexible framework for studying conditional inequality and upper-tail concentration in bivariate economic data.
\begin{ex}\label{ex22}
	Consider the bivariate Lomax (Pareto II) distribution with joint survival function \citep{mardia1962multivariate}
	\begin{equation}\label{Pareto II}
		\bar{F}(x_1,x_2)
		=
		\left(
		1+\frac{x_1}{\beta_1}
		+\frac{x_2}{\beta_2}
		\right)^{-\alpha},
		\qquad x_1>\beta_1>0,x_2>\beta_2>0,
	\end{equation}
	with Pareto type II marginals $\bar{F}_i(x_i)
	=
	\left(1+\frac{x_i}{\beta_i}\right)^{-\alpha},
	x_i > \beta _i > 0, i = 1,2$.  The marginal quantile function is
	\[
	Q_i(u_i)
	=
	\beta_i
	\left[
	(1-u_i)^{-1/\alpha}-1
	\right],
	\qquad 0<u_i<1, \; i = 1, 2.
	\]
	Then the conditional survival function of $X_1$ given $X_2=x_2$ is, 
	\[
	\bar{F}_{12}(x_1 | x_2)
	=
	\left[
	1+
	\frac{x_1}
	{\beta_1(1+x_2/\beta_2)}
	\right]^{-\alpha-1}.
	\]
	It follows that the conditional quantile function of $X_1$ given $X_2=x_2$ is
		\[
	Q_{12}(u_1, u_2)
	=
	\beta_1
	(1-u_2)^{-1/\alpha}
	\left[
	(1-u_1)^{-1/(\alpha+1)}-1
	\right]
	\]
	and the conditional quantile density is obtained as
	\[
	q_{12}(u_1, u_2)
	=
	\frac{\beta_1}{\alpha+1}
	(1-u_2)^{-1/\alpha}
	(1-u_1)^{-(\alpha+2)/(\alpha+1)}.
	\]
	Using \eqref{2.4} we obtain
	\[
	W_{12}(u_1,u_2)
	=
	\frac{\beta_1}{\alpha}
	(1-u_2)^{-1/\alpha}
	(1-u_1)^{\alpha/(\alpha+1)}.
	\]
	Similarly, by interchanging the roles of $X_1$ and $X_2$, we obtain
	\[
	W_{21}(u_1,u_2)
	=
	\frac{\beta_2}{\alpha}
	(1-u_1)^{-1/\alpha}
	(1-u_2)^{\alpha/(\alpha+1)}.
	\]
	Figure \ref{fig1} provides the geometry of the pair of BEWT for bivariate Pareto distribution.
	\begin{figure}[ht!]
		\begin{center}
			\subfloat[$W_{12}(u_1, u_2)$]{\includegraphics[width=0.49\textwidth]{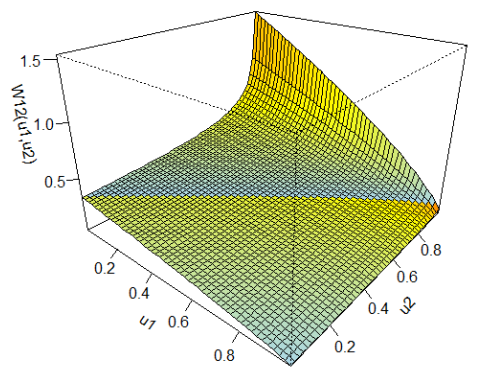}}
			\subfloat[$W_{21}(u_1, u_2)$]{\includegraphics[width=0.49\textwidth]{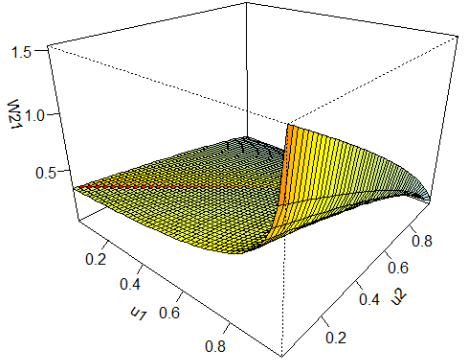}}
		\end{center}
		\caption{BEWT for Bivariate Pareto II}%
	\end{figure}\label{fig1}
\end{ex}

\begin{ex}
	Consider the following conditional quantile functions, given by
	\[
	Q_{12}(u_1, u_2) = \lambda_1 \left[(1 - u_1)^{-1/\alpha} - 1\right](1 - u_2)^{-\beta}
	\]
	and
	\[
	Q_{21}(u_1, u_2) = \lambda_2 \left[(1 - u_2)^{-1/\alpha} - 1\right](1 - u_1)^{-\beta}
	\]
	where $\lambda_1, \lambda_2 > 0, \alpha > 1$ and $\beta \geq 0$.  Note that the above conditional quantile densities do not possess a closed form DF or SF, that makes the significance of the defintion of BEWT in \eqref{2.4} and \eqref{2.5}, which provide the BEWT as
	\[
	W_{12}(u_1, u_2) = \frac{\lambda_1}{\alpha - 1} (1 - u_1)^{1 - 1/\alpha}(1 - u_2)^{\beta}
	\]
	and
	\[
	W_{21}(u_1, u_2) = \frac{\lambda_2}{\alpha - 1} (1 - u_2)^{1 - 1/\alpha}(1 - u_1)^{\beta}.
	\]
\end{ex}

The BEWTs in \eqref{2.4} and \eqref{2.5} can also be written as
\begin{equation}\label{2.10}
	W_{12}(u_1, u_2) = \int_{u_1}^{1}{\left(Q_{12}(p_1, u_2) - Q_{12}(u_1, u_2)\right)dp_1} = \int_{u_1}^{1}{Q_{12}(p_1, u_2)dp_1} - (1 - u_1)Q_{12}(u_1, u_2),
\end{equation}
and
\begin{equation}\label{2.11}
	W_{21}(u_1, u_2) = \int_{u_2}^{1}{\left(Q_{21}(u_1, p_2) - Q_{21}(u_1, u_2)\right)dp_2} = \int_{u_2}^{1}{Q_{21}(u_1, p_2)dp_2} - (1 - u_2)Q_{21}(u_1, u_2)
\end{equation}
Equations \eqref{2.10} and \eqref{2.11} are equivalent to 
\begin{equation}\label{2.12}
	W_{12}(u_1,u_2) = \E \left[\left\{Q_{12}(X_1, u_2) - Q_{12}(u_1, u_2)\right\}_{+} |X_1 > u_1)\right],
\end{equation}
and
\begin{equation}\label{2.13}
	W_{21}(u_1,u_2) = \E \left[\left\{Q_{21}(u_1, X_2) - Q_{21}(u_1, u_2)\right\}_{+} |X_2 > u_2\right],
\end{equation}
where $a_+=\max(a, 0)$, denote respectively the conditonal stop-loss transforms, obtained as an extension of the univariate stop-loss transform given in  \cite{nair2013quantile}.
\begin{rem}
	The representation in \eqref{2.12} and \eqref{2.13} are particularly useful in empirical applications. It avoids numerical differentiation of an empirical conditional quantile function and therefore provides a more stable estimator of the BEWT.
\end{rem}
\subsection{Normalized Bivariate Excess Wealth Transform}

The values of the bivariate excess wealth transforms $W_{12}$ and $W_{21}$ depend on the scales of the underlying random variables.  Since the two components may be measured on substantially different
scales, a scale-free version is useful for comparative analysis.
Accordingly, we introduce normalized forms of the BEWT, obtained by taking the ratio of BEWTs to its corresponding marginal EWTs, given respectively by 
\begin{equation}\label{216}
	\widetilde{W}_{12}(u_1,u_2)
	=
	\frac{W_{12}(u_1,u_2)}
	{W_1(u_1)}
\end{equation}
and
\begin{equation}\label{217}
	\widetilde{W}_{21}(u_1,u_2)
	=
	\frac{W_{21}(u_1,u_2)}
	{W_2(u_2)},
\end{equation}
provided $W_1(u_1)>0$ and $W_2(u_2)>0$.

The normalized BEWT has a direct interpretation in terms of dependence. When $X_1$ and $X_2$ are independent, we have $F_{ij}\left(x_i | Q_j (u_j)\right) = F_i (x_i), \; i, j = 1, 2;  \; i \neq j$ and $Q_{ij}(u_1, u_2) = Q_i (u_i), \; i, j = 1, 2;  \; i \neq j$, using \eqref{11} and \eqref{12}, we get $W_{ij} (u_1, u_2) = W_i (u_i),  \; i, j = 1, 2;  \; i \neq j$, which in turn gives \eqref{216} and \eqref{217} to 
\[
\widetilde{W}_{12}(u_1,u_2)
=
\widetilde{W}_{21}(u_1,u_2)
=1.
\]
Thus, departures of $\widetilde{W}_{12}$ and $\widetilde{W}_{21}$
from unity provide a natural way of quantifying the effect of
conditioning. This
property makes the normalized BEWT particularly useful
for studying dependence between the two components.

\section{Fundamental Properties of the BEWT}

\begin{itemize}
	\item \textbf{Non-negativity}
	For every $u_1,u_2\in(0,1)$, $W_{12}(u_1,u_2)\geq 0$ and $W_{21}(u_1,u_2)\geq 0$.
	\begin{proof}
		Since $Q_{12}(u_1, u_2)$ and $Q_{21}(u_1, u_2)$ are non-decreasing
		quantile functions, their derivatives, the quantile density functions satisfy $q_{12}(u_1, u_2)\geq 0$ and $q_{21}(u_1, u_2)\geq 0$.  Moreover, for $p\in[u_1,1]$ and $p\in[u_2,1]$, respectively, gives $1-p_1\geq 0$, and hence $(1-p_1)q_{12}(p_1, u_2) \geq 0$ and $(1-p_2)q_{21}(u_1, p_2) \geq 0$, proves the result.  Consequently,
		\[
		W_{12}(u_1,u_2)
		=
		\int_{u_1}^{1}
		(1-p_1)q_{12}(p_1, u_2)\,dp
		\geq0
		\]
		and
		\[
		W_{21}(u_1,u_2)
		=
		\int_{u_2}^{1}
		(1-p_2)q_{21}(u_1, p_2)\,dp
		\geq0.
		\]
	\end{proof}
	\item \textbf{Boundary conditions}
	For every fixed $u_2\in(0,1)$ and $u_1\in(0,1)$;
	\begin{align*}
		\text{(a)}\quad & \lim_{u_1\to1}W_{12}(u_1,u_2) =0, \quad \lim_{u_2\to1}W_{21}(u_1, u_2) =0\\
		\text{(b)}\quad & \lim_{u_2\to 0}W_{12}(u_1,u_2) = W_{X_1}(u_1), \quad \lim_{u_1\to 0}W_{21}(u_1,u_2) = W_{X_2}(u_2)
	\end{align*}
	\begin{proof}
		The proof (a) directly follows from \eqref{2.4} and \eqref{2.5}.  Since $\lim_{u_2\to 0}Q_{12}(u_1,u_2)= Q_1(u_1)$ and $\lim_{u_1\to 0}Q_{21}(u_1,u_2)= Q_2(u_2)$, (b) follows.
	\end{proof}
	
	\item \textbf{Monotonicity:}
	For fixed $u_2\in(0,1)$, $W_{12}(u_1,u_2)$ is non-increasing in $u_1$.
	Similarly, for fixed $u_1\in(0,1)$, $W_{21}(u_1,u_2)$ is non-increasing
	in $u_2$.
	\begin{proof}
		By the fundamental theorem of calculus, we have
		\[
		\frac{\partial}{\partial u_1}
		W_{12}(u_1, u_2)
		=
		-(1-u_1)q_{12}(u_1, u_2).
		\]
		Since $1-u_1\geq 0$ and $q_{12}(u_1, u_2)\geq 0$, we obtain $\frac{\partial}{\partial u_1}W_{12}(u_1, u_2) \leq 0$.  Therefore, $W_{12}(u_1,u_2)$ is non-increasing in $u_1$. In a similar manner, $W_{21}(u_1,u_2)$ is non-increasing in $u_2$.
	\end{proof}
	\item \textbf{Continuity:}
	Suppose that $q_{12}(u_1, u_2)$ is continuous in $u_2$ for almost every $u_1 \in (u_1, 1)$ and that there exists an integrable function $g_1(u_1)$ such that $(1-u_1)q_{12}(u_1, u_2)\leq g_1(u_1)$ for all $u_2$ in a neighbourhood of a fixed point. Then $W_{12}(u_1, u_2)$ is continuous in $u_2$.  Similarly, if there exists an integrable function $g_2(u_2)$ such that $(1-u_2)q_{21}(u_1, u_2)\leq g_2(u_2)$, then $W_{21}(u_1, u_2)$ is continuous in $u_1$.
	\begin{proof}
		Let $u_2^{(n)}\to u_2$. By assumption, $q_{12}(u_1, u_2^{(n)})
		\longrightarrow q_{12}(u_1, u_2)$ for almost every $p$. Moreover, $(1-u_1)q_{12}(u_1, u_2^{(n)})
		\leq g_1(u_1)$, where $g_1$ is integrable. Hence, by the dominated convergence theorem,
		\[
		\begin{aligned}
			\lim_{n\to\infty}
			W_{12}(u_1,u_2^{(n)})
			&=
			\lim_{n\to\infty}
			\int_{u_1}^{1}
			(1-p_1)q_{12}(p_1, u_2^{(n)})\,dp_1\\
			&=
			\int_{u_1}^{1}
			(1-p_1)q_{12}(p_1, u_2)\,dp_1 = W_{12}(u_1,u_2).
		\end{aligned}
		\]
		Thus, $W_{12}$ is continuous in $u_2$. The proof for $W_{21}$ is
		analogous.
	\end{proof}
	\item \textbf{Translation Invariance:}
	Let $X_1^*=X_1+a,\quad X_2^*=X_2+b$, where $a,b\in\mathbb{R}$. Then
	\[
	W_{12}^*(u_1,u_2)=W_{12}(u_1,u_2) \quad \text{and} \quad W_{21}^*(u_1,u_2)=W_{21}(u_1,u_2).
	\]
	
	\begin{proof}
		Since quantile function possess location invariant property, we have when $Y = X + a$, then $Q_Y(u) = Q_X (u) + a$.  Applying this property we have, 
		\[
		Q_{X_1^*}(u_1)=Q_{X_1}(u_2)+a \quad \text{and} \quad Q_{X_2^*}(u_2)=Q_{X_2}(u_2)+b.
		\]
		In a similar manner,
		\[
		Q_{12^*}(u_1, u_2) =
		Q_{12^*}(u_1, u_2)+a \quad \text{and} \quad Q_{21^*}(u_1, u_2) = Q_{21^*}(u_1, u_2)+b
		.
		\]
		Differentiating with respect to $u_i, \; i = 1,2$ gives
		\[
		q_{12}^*(u_1, u_2)
		=
		q_{12}(u_1, u_2) \quad \text{and} \quad q_{21}^*(u_1, u_2)
		= q_{21}(u_1, u_2),
		\]
		which proves the result.
	\end{proof}
	\item \textbf{Positive Scale Equivariance:}
	Let $X_1^*=aX_1,\; X_2^*=bX_2, \; a, b>0$.  Then
	\[
	W_{12}^*(u_1,u_2) = aW_{12}(u_1,u_2) \quad \text{and} \quad W_{21}^*(u_1,u_2)
	=
	bW_{21}(u_1,u_2).
	\]
	\begin{proof}
		The proof follows from the scale-invariant property of quantile function that if $Y = aX, \; a > 0$ then $Q_Y(u) = a Q_X (u)$.
	\end{proof}
	\item \textbf{Convexity and Concavity:}
	Suppose that $q_{12}(u_1, u_2)$ is differentiable with respect to $u_i, \; i = 1, 2$. Then $W_{ij}(u_1, u_2), \; i, j = 1, 2, \; i \neq j$ is convex (concave) in $u_i$ if
	\[
	q_{ij}(u_1, u_2) \geq (\leq)
	(1-u_i) \frac{\partial}{\partial u_i}
	q_{ij}(u_1, u_2), \; i, j = 1, 2, \; i \neq j
	\]
	\begin{proof}
		From \eqref{2.4} and \eqref{2.5}, we have 
		\[
		\frac{\partial W_{ij}(u_1, u_2)}
		{\partial u_i}
		=
		-(1-u_i)q_{ij}(u_1, u_2), \; i, j = 1, 2, \; i \neq j.
		\]
		Differentiating once more with respect to $u_i$ gives
		\[
		\frac{\partial^2 W_{ij}(u_1, u_2)}
		{\partial u_i^2}
		= q_{ij}(u_1, u_2)
		-
		(1-u_i)
		\frac{\partial}{\partial u_i}
		q_{ij}(u_1, u_2), \; i, j = 1, 2, \; i \neq j.
		\]
		Thus, the stated conditions determine the convexity or concavity of
		$W_{ij}(u_1, u_2)$. 
	\end{proof}
\end{itemize}

\section{Characterizations of BEWT}
For the non-negative random vector $(X_1, X_2)$, \cite{vineshkumar2019bivariate} have extended the notion of hazard quantile, denoted by $H(u)= h\left(Q(u)\right)= \frac{1}{(1 - u)q(u)}$, where $h(x) = \frac{f(x)}{\bar{F}(x)}$ to the bivariate case, keeping the dependence structure by defining it as a pair 
\[
\left(H_1(u_1), H_{21}(u_1, u_2)\right) \quad \text{or} \quad \left(H_{12}(u_1, u_2), H_2(u_2)\right)
\]
where 
\begin{equation}\label{conditional-hazard}
	H_i (u_i) = \frac{1}{(1 - u_i)q_i (u_i)}, \; i = 1, 2
\end{equation}
denotes the hazard quantile function of the random variable $X_i, \; i = 1, 2$, and 
\[
H_{ij}(u_1, u_2) = \frac{1}{(1 - u_i) q_{ij}(u_1, u_2)}, \; i, j = 1, 2, \; i \neq j
\]
denotes the conditional hazard quantile function corresponding to ${X_i|X_j > Q_j (u_j),} \; i, j = 1, 2, \; i \neq j$. Using the Definition \ref{def22}, the BEWT in terms of conditional hazard quantile functions, given by
\begin{equation}\label{31}
	W_{12}(u_1, u_2) = \int_{u_1}^1 {\frac{dp_1}{H_{12}(p_1, u_2)}}
\end{equation}
and
\begin{equation}\label{32}
	W_{21}(u_1, u_2) = \int_{u_2}^1 {\frac{dp_2}{H_{21}(u_1, p_2)}}.
\end{equation}
\begin{ex}
	If $(X_1, X_2)$ follows Gumbel's bivariate exponential with survival function \eqref{26}, then
	\[
	W_{ij}(u_1, u_2) = \frac{1 - u_i}{H_{ij}(u_1, u_2)}, \; i = 1,2, \; i \neq j.
	\]
\end{ex}
Another important measure useful in quantile-based lifetime analysis is the mean residual quantile function, defined by $M(u) = m\left(Q(u)\right) = \frac{1}{1- u}\int_u^1{(1 - p)q(p)dp}$, where $m(x) = E(X - x | X >x) = \frac{1}{\bar{F}(x)} \int_x^{\infty}{\bar{F}(t)dt}$ is the mean residual life function of the random variable $X - x | X > x$. Like the bivariate hazard quantile function as a pair of marginal hazard and conditional quantile functions, \cite{vineshkumar2019bivariate} have defined the mean residual quantile function as a vector,
\[
\left(M_{1}(u_1), M_{21}(u_1, u_2)\right)
\]
where $M_1 (u_1)= \int_{u_1}^1{(1 - p)q_1 (p)dp}$, the mean residual quantile function of $X_1$, and
\begin{equation}\label{conditional-mrl-2}
	M_{21}(u_1, u_2) = \frac{1}{1 - u_2}\int_{u_2}^{1}{(1 - p_2) q_{21}(u_1, p_2)dp_2} = \frac{W_{21}(u_1, u_2)}{1 - u_2}
\end{equation}
is the mean residual quantile function corresponding to $X_2|X_1 > Q_1 (u_1)$.  The bivariate mean residual quantile function can also be defined as an alternative way by defining it $\left(M_{12}(u_1, u_2), M_{2}(u_2)\right)$, where 
\begin{equation}\label{conditional-mrl-1}
	M_{12}(u_1, u_2) = \frac{1}{1 - u_1}\int_{u_1}^{1}{(1 - p_1) q_{12}(p_1, u_2)dp_1} = \frac{W_{12}(u_1, u_2)}{1 - u_1},
\end{equation}
and $M_2 (u_2)= \int_{u_2}^1{(1 - p)q_2 (p)dp}$ denote respectively the mean residual quantile functions corresponding to $X_1|X_2 > Q_2 (u_2)$ and $X_2$ (see also \citet{vineshkumar2025vector}) and therefore
\begin{equation}\label{33}
	\left(W_{12}(u_1, u_2), W_{21}(u_1, u_2)\right) = \left((1 - u_1) m_{12}(u_1, u_2), (1 - u_2) m_{21}(u_1, u_2)\right).
\end{equation}
Equation \eqref{33} provides a one-to-one relationship between BEWT and vector-valued conditional mean residual quantile function.  In other words, $m_{12}$ and $m_{21}$ characterizes $W_{12}$ and $W_{21}$.

\begin{thm}
	Let $(X_1, X_2)$ be a bivariate random vector with marginal distribution functions $F_{X_1}$ and $F_{X_2}$ that are absolutely continuous with quantile densities $q_{X_1}$ and $q_{X_2}$ existing on (0,1). Suppose that $X_1$ and $X_2$ are independent. Then $W_{12}(u_1, u_2) = W_{X_1}(u_1)$ and $W_{21}(u_1, u_2) = W_{X_2}(u_2)$.  
\end{thm}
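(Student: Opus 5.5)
The plan is to show that under independence the conditional quantile functions defined in \eqref{2.2} and \eqref{2.3} collapse to the marginal quantile functions. Then either representation of the BEWT, \eqref{11}--\eqref{12} or \eqref{2.4}--\eqref{2.5}, reduces to the univariate EWT \eqref{1.1}. I will argue for $W_{12}$ in full; the argument for $W_{21}$ is the same with the roles of the indices interchanged.

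\emph{Step 1 (conditional distribution).} Fix $u_2\in(0,1)$. Since $F_2$ is continuous and strictly increasing, $P(X_2>Q_2(u_2))=1-u_2>0$, so the conditioning event has positive probability. By independence,
\[
F_{12}\bigl(x_1\,|\,Q_2(u_2)\bigr)=\frac{P\bigl(X_1\le x_1,\,X_2>Q_2(u_2)\bigr)}{P\bigl(X_2>Q_2(u_2)\bigr)}=\frac{F_1(x_1)\,(1-u_2)}{1-u_2}=F_1(x_1),
\]
and hence $\bar F_{12}(x_1\,|\,Q_2(u_2))=\bar F_1(x_1)$ for every $x_1$.

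\emph{Step 2 (conditional quantile).} Substituting Step 1 into \eqref{2.2} gives
\[
Q_{12}(u_1,u_2)=\inf\{x_1: F_1(x_1)\ge u_1\}=Q_1(u_1),
\]
so $Q_{12}$ does not depend on $u_2$. Differentiating in $u_1$ yields $q_{12}(u_1,u_2)=q_{X_1}(u_1)$ on $(0,1)$.

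\emph{Step 3 (conclusion).} Inserting Steps 1 and 2 into \eqref{11} gives
\[
W_{12}(u_1,u_2)=\int_{Q_1(u_1)}^\infty \bar F_1(x_1)\,dx_1=W_{X_1}(u_1),
\]
which is the univariate EWT \eqref{1.1}. Equivalently, from \eqref{2.4},
\[
W_{12}(u_1,u_2)=\int_{u_1}^1(1-p)\,q_{X_1}(p)\,dp .
\]
This is the quantile form of the same quantity, via the substitution $x=Q_1(p)$, which is valid because $Q_1$ is absolutely continuous. Finiteness of the integral follows from $E(X_1)<\infty$, since $\int_{Q_1(u_1)}^\infty \bar F_1(x)\,dx\le E(X_1)$.

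There is no real obstacle in this argument. The only point needing care is in Step 1: checking that the conditioning event $\{X_2>Q_2(u_2)\}$ has positive probability, so that the conditional distribution is well defined. This is guaranteed for $u_2<1$ by the continuity and strict monotonicity of $F_2$.
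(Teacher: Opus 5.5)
Your proof is correct and follows the paper's argument: independence gives $F_{12}(x_1\,|\,Q_2(u_2))=F_1(x_1)$, hence $Q_{12}=Q_1$ and $q_{12}=q_{X_1}$, and substituting into the defining integral gives the univariate EWT (the paper substitutes into \eqref{2.4}, while you work mainly from \eqref{11}). Two of your steps are small gains in rigor that the paper leaves implicit: the check that the conditioning event has positive probability, and the justification of the change of variables $x=Q_1(p)$. For the first, continuity of $F_2$ alone already gives $P(X_2>Q_2(u_2))=1-u_2$, so you do not need strict monotonicity there.
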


\begin{proof}
	Since $X$ and $Y$ are independent, the events $\{X_1 \leq x_1\}$ and $\{X_2 > Q_{X_2}(u_2)\}$ are independent. Hence, $F_{12}\left(x_1 | Q_2 (u_2)\right) = F_{1}(x_1)$.  Consequently, $Q_{12}(u_1, u_2) = Q_{1}(u_1)$ and $q_{12}(u_1 | u_2) = q_{1}(u_1)$, which upon substitution in \eqref{2.4} proves the result. The other case is similar.
\end{proof}

Next, we prove that $W(u_1, u_2)$ uniquely determines the underlying joint distribution function or survival function.
\begin{thm}
	Let $(X_1, X_2)$ be absolutely continuous non-negative random vector with joint distribution function $F(x_1, x_2)$ and marginal distribution functions $F_{X_1}(x_1)$ and $F_{X_2} (x_2)$, and joint density $f(x_1, x_2)$.  Assume that \(F\) is strictly increasing in each argument and that $E(X_1)<\infty$ and $E(X_2)<\infty$. For every $u_1, u_2 \in [0,1)$, suppose that the conditional distributions $(X_1 | X_2 > Q_2(u_2))$ and $(X_2 | X_1 > Q_1(u_1))$ are absolutely continuous and possess absolutely continuous quantile functions with quantile densities $q_{12}(u_1, u_2)$ and $q_{21}(u_1, u_2)$, respectively. Also, assume that $Q_1(u_1) =0$ and $Q_2(u_2) =0$.  Then BEWT, $W(u_1, u_2)$ uniquely determines $F$, or equivalently $\bar{F}$.
	\begin{proof}
		Differentiating \eqref{2.4} with respect to $u_1$, we have
		\[
		\frac{\partial W_{12}(u_1, u_2)}{\partial u_1} = -(1 - u_1) q_{12}(u_1, u_2),
		\]
		equivalently
		\[
		q_{12}(u_1, u_2) = -\frac{1}{1 - u_1} \frac{\partial W_{12}(u_1, u_2)}{\partial u_1}
		\]
		Using $Q_{12}(0, u_2) = 0$, integration with respect to $u_1$ gives
		\begin{equation}\label{214}
			Q_{12}(u_1, u_2) = \int_{0}^{u_1}{\frac{1}{1 - p_1} \frac{\partial W_{12}(p_1, u_2)}{\partial p_1}dp_1}.
		\end{equation}
		Similarly, 
		\begin{equation}\label{215}
			Q_{21}(u_1, u_2) = \int_{0}^{u_2}{\frac{1}{1 - p_2} \frac{\partial W_{21}(u_1, p_2)}{\partial p_2}dp_2}.
		\end{equation}
		Thus, $W_{12}$ and $W_{21}$ uniquely determine the conditional quantile functions $Q_{12}$ and $Q_{21}$ respectively. 
		
		Setting $u_2 = 0$, the random variable $(X_1 | X_2 > Q_2(0))$ reduces to $(X_1 | X_2 > 0)$ which, in turn, reduces to $X_1$.  Thus, $Q_{12}(u_1, 0) = Q_1 (u_1)$ and similarly, $Q_{21}(0, u_2) = Q_2 (u_2)$ and can be obtained directly from $W_{12}$ and $W_{21}$ using \eqref{214} and \eqref{215}.   Once the marginals $Q_1$ and $Q_2$ are known, setting, say, $x_2 = Q_2(u_2)$, the conditional quantile function $Q_{12}(u_1, u_2)$ becomes quantile function of the random variable $(X_1 | X_2 > x_2)$.  Now by setting $x_1 = Q_{12} (u_1, u_2) = Q_{X_1 | X_2 > x_2} (u_1, u_2)$ and its relationship with the conditional survival function $\bar{F}_{12}(x_1| x_2)$, the joint survival function $\bar{F}(x_1, x_2)$ can be obtained.  This completes the proof.    
	\end{proof}
\end{thm}
We now prove a characterization theorem for the bivariate Gumbel's exponential distribution \eqref{26} based on the BEWT.
\begin{thm}\label{thm43}
	Let $(X_1, X_2)$ be a non-negative absolutely continuous random vector, and $W(u_1,u_2)=\bigl(W_{12}(u_1,u_2),W_{21}(u_1,u_2)\bigr)$ be its BEWT, with $u_1, u_2 \in (0,1)$.  Suppose that $c_1, c_2 > 0$ and $0 < k < c_1 c_2$. Then $(X_1, X_2)$ follows Gumbel's bivariate exponential distribution with joint survival function \eqref{26} if and only if that BEWT satisfies the relationships
	\begin{equation}\label{28}
		W_{12}(u_1, u_2) = \frac{c_2 (1 - u_1)}{c_1 c_2 - k \log (1 - u_2)}
	\end{equation}
	and
	\begin{equation}\label{29}
		W_{21}(u_1,u_2) = \frac{c_1 (1 - u_2)}{c_1 c_2 - k \log (1 - u_1)}.
	\end{equation}
	\begin{proof}
		The 'if' part is directly follows from Example \ref{ex21}.  To prove the 'only if' part, we assume that \eqref{28} and \eqref{29} holds.  Set $u_2 = 0$, from \eqref{28}, we have
		\begin{equation}\label{210}
			W_{12}(u_1, 0) = \frac{1 - u_1}{c_1}.
		\end{equation}
		But when $u_2 = 0$, the conditional threshold is the lower end point $Q_2(0) = 0$, and therefore the conditional random variable $X_1 | X_2 > 0$ reduces to a univariate random variable $X_1$, and therefore $W_{12}(u_1, 0)$ becomes the univariate excess wealth transform function of $X_1$.  That is, $W_{12}(u_1, 0) = W_1 (u_1)$.
		
		The EWT given in \eqref{1.1} can also be expressed as $W(u) = \int_{u}^1{(1 - p)q(p)dp}$, on differentiating $W(u)$ with respect to $u$, yield
		\[
		\frac{dW(u)}{du} = -(1 - u)q(u)
		\]
		and therefore from \eqref{210}, $-\frac{1}{c_1} = -(1 - u_1)q_1(u_1)$ implies that
		\[
		q_1 (u_1) = \frac{1}{c_1 (1 - u_1)}.
		\]
		Integrating with respect to $u_1$ and using $Q_1 (0) = 0$, we obtain
		\[
		Q_1 (u_1) = -\frac{1}{c_1} \log (1 - u_1)
		\]
		implies that $\bar{F}_1 (x_1) = e^{-c_1 x_1}$.  Similarly from $W_{21}(0, u_2) = \frac{1 - u_2}{c_2}$ we obtain $Q_2 (u_2) = -\frac{1}{c_2} \log (1 - u_2)$ and hence $\bar{F}_2 (x_2) = e^{-c_2 x_2}$.
		
		From the Definition \ref{def21} and using \eqref{28}, we have the fundamental relation,
		\[
		\frac{\partial W_{12}(u_1, u_2)}{\partial u_1} = -(1 - u_1) q_{12}(u_1, u_2).
		\]
		From \eqref{28}, the above relationship becomes
		\[
		q_{12}(u_1, u_2) = \frac{c_2}{(1 - u_1) \left[c_1 c_2 - k \log (1 - u_2)\right]}
		\]
		Integrating with respect to $u_1$ and using $Q_{12}(0, u_2) = 0$, yield
		\begin{equation}\label{211}
			Q_{12}(u_1, u_2) = -\frac{c_2 \log (1 - u_1)}{c_1 c_2 - k \log (1 - u_2)} = -\frac{\log (1 - u_1)}{c_1 + k Q_2 (u_2)}.
		\end{equation}
		From the relationship between conditional quantile function and conditional survival function and setting $Q_{12}(u_1, u_2) = x_1$ and $u_1 = \bar{F}_{12} \left(x_1 | Q_2 (u_2)\right)$, \eqref{211} becomes
		\[
		\bar{F}_{12} \left(x_1 | Q_2 (u_2)\right) = \exp \{-x_1\left[c_1 + k Q_2 (u_2)\right]\}.
		\]
		Setting $Q_2(u_2) = x_2$ simplifies to 
		\[
		\bar{F}_{12} \left(x_1 | x_2\right) = \exp \{-x_1\left(c_1 + k x_2\right)\},
		\]
		and multiplying with the marginal survival function $\bar{F}_2 (x_2)$, we obtain the joint pdf of Gumbel's exponential given in \eqref{26}.  This completes the proof.
	\end{proof}
\end{thm}
\begin{cor}
	Let $(X_1, X_2)$ be a nonnegative, absolutely continuous random vector with continuous, strictly increasing marginal distribution functions $F_i(x_i)$ and quantile functions $Q_i(u_i), \; i = 1, 2; \; i \neq j$.  Then $(X_1, X_2)$ follows Gumbel's bivariate exponential with joint survival function \eqref{26} if and only if the BEWT satisfy
	\[
	W_{ij}(u_1, u_2) = \frac{1 - u_i}{c_i + k Q_j (u_j)}; \; c_i, k > 0, \; i = 1,2, \; i \neq j.
	\]
\end{cor}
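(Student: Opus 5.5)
The plan is to reduce the corollary to Theorem \ref{thm43} by rewriting its expressions in terms of the marginal quantiles. For the Gumbel model \eqref{26} the marginals are exponential with $Q_j(u_j) = -\frac{1}{c_j}\log(1-u_j)$. Hence
\[
c_1c_2 - k\log(1-u_2) = c_2\bigl(c_1 + kQ_2(u_2)\bigr),
\]
so \eqref{28} reads $W_{12}(u_1,u_2) = \frac{1-u_1}{c_1 + kQ_2(u_2)}$. In the same way, \eqref{29} becomes $W_{21}(u_1,u_2) = \frac{1-u_2}{c_2 + kQ_1(u_1)}$. This identity was already used in \eqref{211}.

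\emph{Only if.} Assume \eqref{26} holds with $\lambda_i = c_i$ and $\theta = k$. Example \ref{ex21} gives \eqref{13}. Substituting the exponential quantiles turns \eqref{13} into the stated form.

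\emph{If.} Assume $W_{ij}(u_1,u_2) = \frac{1-u_i}{c_i + kQ_j(u_j)}$ with $Q_j$ a priori unknown. The first task is to identify the marginals, and I would proceed as in the proof of Theorem \ref{thm43}. Setting $u_2=0$ and using $Q_2(0)=0$ (nonnegativity together with a strictly increasing $F_2$) gives $W_{12}(u_1,0) = \frac{1-u_1}{c_1}$. By Boundary condition (b) this equals $W_1(u_1)$. Differentiating, $-(1-u_1)q_1(u_1) = -\frac{1}{c_1}$, so $Q_1(u_1) = -\frac{1}{c_1}\log(1-u_1)$. Symmetrically, $Q_2(u_2) = -\frac{1}{c_2}\log(1-u_2)$.

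Next I would substitute these marginal quantiles back. This yields exactly \eqref{28} and \eqref{29}, with the same constants. Applying Theorem \ref{thm43} then gives the Gumbel law \eqref{26}. Alternatively, I could repeat the direct route: compute $q_{12} = \frac{1}{(1-u_1)(c_1 + kQ_2(u_2))}$, integrate using $Q_{12}(0,u_2)=0$, and invert to obtain $\bar F_{12}(x_1\mid x_2) = e^{-x_1(c_1+kx_2)}$.

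The main obstacle is the parameter constraint. Theorem \ref{thm43} requires $0<k<c_1c_2$, while the corollary only assumes $c_i,k>0$. I would therefore impose the constraint $k \le c_1c_2$, which is needed for \eqref{26} to be a valid survival function. I would also justify $Q_j(0)=0$ explicitly so that the specialisation $u_j=0$ is legitimate.
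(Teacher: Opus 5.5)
Your proposal is correct and follows the paper's implicit argument. The corollary is Theorem \ref{thm43} rewritten through the identity $c_1c_2 - k\log(1-u_2) = c_2\bigl(c_1 + kQ_2(u_2)\bigr)$ already displayed in \eqref{211}, and your preliminary step of identifying the exponential marginals by setting $u_j = 0$ is exactly the first step in the proof of Theorem \ref{thm43}. One small remark: in the ``if'' direction you need not impose $k \le c_1c_2$ as an extra hypothesis, because it is forced once $\exp(-c_1x_1 - c_2x_2 - kx_1x_2)$ is a genuine survival function, since its density at the origin is $c_1c_2 - k \ge 0$.
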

A more general characterization theorem in terms of the conditional hazard quantile functions can be obtained from the following lemma.
\begin{lem}\label{lem41}
	Let $(X_1, X_2)$ be a nonnegative, absolutely continuous random vector with continuous, strictly increasing marginal distribution functions $F_i(x_i)$ and quantile functions $Q_i(u_i), \; i = 1, 2; \; i \neq j$.  Then $(X_1, X_2)$ follows Gumbel's bivariate exponential with joint survival function \eqref{26} if and only if the conditional hazard quantile functions satisfy
	\begin{equation}\label{4.10}
		H_{ij}(u_1, u_2) = a_i + b \ c_i (u_j), \; i, j = 1,2; \; i \neq j,
	\end{equation}
	where $a_i, b > 0$, $c_i(u_j), \; i, j = 1, 2$ is a function of $u_j$ such that $c_1(0) = c_2 (0) = 0$. 
	\begin{proof}
		The 'if' part is directly follows from Example \ref{ex21}.
		
		To prove the 'only if' part, suppose that \eqref{4.10} holds.  Consider now $i = 1$, using the relationship between $H_{12}(u_1, u_2)$ and $q_{12}(u_1 | u_2)$, we obtain
		\[
		q_{12}(u_1, u_2) = \frac{1}{(1 - u_1) \left[a_1 + b \ c_1 (u_2)\right]}
		\]
		On integrating with respect to $u_1$, yield
		\begin{equation}\label{4.11}
			Q_{12}(u_1, u_2) = -\frac{\log (1 - u_1)}{a_1 + b \ c_1 (u_2)}
		\end{equation}
		Similarly for $i = 2$ in \eqref{4.10}, yield
		\begin{equation}\label{4.12}
			Q_{21}(u_1, u_2) = -\frac{\log (1 - u_2)}{a_2 + b \ c_2 (u_1)}.
		\end{equation}
		As $u_1 \to 0$, \eqref{4.12} becomes
		\[
		Q_2 (u_2) = -\frac{\log (1 - u_2)}{a_2}
		\]
		Setting $x_2 = Q_2(u_2)$ and $u_2 = F_2 (x_2)$, we get $\bar{F}_2 (x_2) = \exp \{-a_2 x_2\}$.  Using \eqref{4.11} and setting $x_1 = Q_{12}(u_1, u_2)$ and $u_1 = F_{12}\left(x_1| Q_2 (u_2)\right)$ with $x_2 = Q_2(u_2)$ and simplifying,
		\[
		F_{12}\left(x_1| x_2\right) = \exp \left[-a_1 x_1 - b \ c_1\left(F_2 (x_2)\right) x_1\right].
		\]
		Multiplying with the expression of $\bar{F}_2(x_2)$, we get
		\begin{equation}\label{4.13}
			\bar{F}(x_1, x_2) = \exp \left[-a_1 x_1 - a_2 x_2 - b \ c_1 \left(F_2 (x_2)\right)x_1\right].
		\end{equation}
		Starting from the second component of \eqref{4.10} for $i = 2$ and proceeding in a similar manner, we obtain        \begin{equation}\label{4.14}
			\bar{F}(x_1, x_2) = \exp \left[-a_1 x_1 - a_2 x_2 - b \ c_2 \left(F_1 (x_1)\right)x_2\right].
		\end{equation}
		Equating \eqref{4.13} and \eqref{4.14} gives,
		\[
		c_1 \left(F_2 (x_2)\right)x_1 = c_2 \left(F_1 (x_1)\right)x_2
		\]
		This is possible only if $c_1 \left(F_2 (x_2)\right) = Cx_2$ and $c_2 \left(F_1 (x_1)\right) = Cx_1$, where $C$, some positive constant independent of $x_1$ and $x_2$.  This reduces $\bar{F}(x_1, x_2)$ in \eqref{4.13} and \eqref{4.14} to the required form of Gumbel's bivariate exponential.
	\end{proof}
\end{lem}
\begin{thm}
	Under the assumptions of Lemma \ref{lem41}, $(X_1, X_2)$ follows Gumbel's bivariate exponential with joint survival function \eqref{26} if and only if the BEWT satisfy
	\[
	W_{ij}(u_1, u_2) = \frac{1 - u_i}{a_i + b \ c_i (u_j)}; \; a_i, b > 0, \; i, j = 1,2, \; i \neq j.
	\]
	\begin{proof}
		The proof directly follows from the relationship between BEWT and conditional hazard quantile function $\frac{\partial W_{ij}(u_1, u_2)}{\partial u_i} = -\frac{1}{H_{ij}(u_1, u_2)}, \; i, j = 1,2, \; i \neq j$, and applying Lemma \ref{lem41}.
	\end{proof}
\end{thm}
\begin{thm}\label{thm44}
	Let $(X_1, X_2)$ be a nonnegative, absolutely continuous random vector with continuous, strictly increasing marginal distribution functions $F_i(x_i)$ and quantile functions $Q_i(u_i), \; i = 1, 2; \; i \neq j$.  Then the BEWT satisfy the relationship,
	\begin{equation}\label{410}
		W_{ij}(u_1, u_2) = \frac{\left[1 -(1 - u_i)^{k_2}\right]}{k_1}\left(\frac{1 + \alpha_j Q_j (u_2)}{\alpha_i + \beta Q_j(u_j)}\right); \; \alpha_i > 0, \; i, j = 1,2, \; i \neq j,
	\end{equation}
	if and only $(X_1, X_2)$ follows bivariate Pareto II distribution with survival function,   \begin{eqnarray}\label{411}
		\bar{F}(x_1, x_2) = (1 + a_1 x_1 + a_2 x_2 + b x_1 x_2)^{-c},& \; x_1, x_2 > 0, \; a_1, a_2, c> 1, \nonumber\\&0 < b \leq (c + 1) a_1 a_2
	\end{eqnarray}
	according as $k_1 = k_1 - 1, k_2 = 1 - \frac{1}{k_1}, \alpha_j, \beta > 0$, or $(X_1, X_2)$ follows bivariate finite range distribution with survival function,   \begin{eqnarray}\label{412}
		\bar{F}(x_1, x_2) = (1 - p_1 x_1 - p_2 x_2 + q x_1 x_2)^d,&& 0 < x_1 < \frac{1}{p_1}, 0 < x_2 < \frac{1 - p_1 x_1}{p_2 - q x_1} \nonumber\\
		&& p_1, p_2, d > 0, 1 - d \leq \frac{q}{p_1 p_2} \leq 1,  
	\end{eqnarray}
	according as $ k_1 = k_1 + 1, k_2 = 1 + \frac{1}{k_1}, \alpha_j, \beta < 0$.
	\begin{proof}
		Assume that \eqref{411} holds.  Then we have the conditional survival function,
		\[       \bar{F}_{12}\left(x_1 | Q_2 (u_2)\right) = \left[1 + \left(\frac{a_1 + b Q_2 (u_2)}{1 + a_2 Q_2 (u_2)}\right)x_1\right]^{-c}
		\]
		and correspondingly
		\[
		Q_{12}(u_1|u_2) = \left[(1 - u_1)^{-\frac{1}{c}} - 1\right]\left(\frac{1 + a_2 Q_2 (u_2)}{a_1 + b Q_2(u_2)}\right),
		\]
		with quantile density function,
		\[
		q_{12}(u_1, u_2) = \frac{(1 - u_1)^{-\frac{1}{c} - 1}}{c}\left(\frac{1 + a_2 Q_2 (u_2)}{a_1 + b Q_2(u_2)}\right).
		\]
		Now using \eqref{2.4}, we obtain
		\[
		W_{12} (u_1, u_2) = \frac{\left[1 -(1 - u_1)^{-\frac{1}{c} + 1}\right]}{(c - 1)}\left(\frac{1 + a_2 Q_2 (u_2)}{a_1 + b Q_2(u_2)}\right)
		\]
		which takes the form of \eqref{410}.  Applying similar steps by reversing the conditioning, we obtain $W_{21}(u_1, u_2)$ with the form of \eqref{410} for $i = 2$.
		
		Conversely, assume that \eqref{410} holds, and $k_1 = k_1 - 1, k_2 = 1 - \frac{1}{k_1}, \alpha_j, \beta > 0$.  Now for $i = 1$, on differentiating with respect to $u_1$ yield
		\[
		\frac{\partial W_{12}(u_1, u_2)}{\partial u_1} = \frac{(1 - u_1)^{-\frac{1}{k_1}}}{k_1}\left(\frac{1 + \alpha_2 Q_2 (u_2)}{\alpha_1 + \beta Q_2(u_2)}\right)
		\]
		Using \eqref{214}, we obtain
		\[
		Q_{12}(u_1, u_2) = \left[(1 - u_1)^{-\frac{1}{k_1}} - 1\right]\left(\frac{1 + \alpha_2 Q_2 (u_2)}{\alpha_1 + \beta Q_2(u_2)}\right).
		\]
		Setting $Q_{12}(u_1, u_2) = x_1$ and $u_1 = F_{12}\left(x_1 | Q_2 (u_2)\right)$, and simplifying 
		\begin{equation}\label{413}       \bar{F}_{12}\left(x_1 | Q_2 (u_2)\right) = \left[1 + \left(\frac{\alpha_1 + \beta Q_2 (u_2)}{1 + \alpha_2 Q_2 (u_2)}\right)x_1\right]^{-k_1}.
		\end{equation}
		Now for $i = 2$ in \eqref{410} and $u_1 \to 0$ with $Q_1 (0) = 0$, yield
		\[
		W_2 (u_2) = \frac{\left[1 -(1 - u_2)^{1 - \frac{1}{k_1}}\right]}{\alpha_2 (k_1 - 1)}
		\]
		Differentiating with respect to $u_2$ gives
		\[
		\frac{\partial W_2 (u_2)}{\partial u_2} = \frac{(1 - u_2)^{-\frac{1}{k_1}}}{\alpha_2 k_1}.
		\]
		Using $Q_2(u_2) = \int_0^{u_2}{\frac{\frac{\partial W_2(p_2)}{\partial p_2}}{1 - p_2}dp_2}$, we obtain
		\[
		Q_2(u_2) = \frac{\left[(1 - u_2)^{-\frac{1}{k_1}}\right]}{\alpha_2}.
		\]
		Now setting $Q_2(u_2) = x_2$ and $u_2 = F_2 (x_2)$ and substituting in \eqref{413}, yield the required joint survival function of bivariate Pareto II in \eqref{411}.  The converse part of bivariate finite range distribution \eqref{412} is similar.  This completes the proof of the theorem.
	\end{proof}
\end{thm}

\section{Estimation of the Bivariate Excess Wealth Transform}

Let $\left\{(X_{1i},X_{2i}),\; i=1,2,\ldots,n\right\}$ be a random sample from the joint distribution of $(X_1, X_2)$.
The objective of this section is to construct empirical estimators of the bivariate excess wealth transforms $W_{12}$ and $W_{21}$.

\subsection{Weighted Empirical Estimation}

Suppose that $(X_{11}, X_{21}), (X_{12}, X_{22}), \cdots,(X_{1n}, X_{2n})$ is a random sample from the distribution of $(X_1, X_2)$, with positive sampling weights $\omega_1,\cdots,\omega_n$.  The weighted empirical distribution function of $Y$ is defined by
\begin{equation}
	\widehat F_{X_, n}(x) = \frac{\sum_{i=1}^n \omega_i (X_i \leq x)}{\sum_{i=1}^n\omega_i}, \; i = 1, 2
	\label{eq:weighted-empirical-FY}
\end{equation}
The corresponding weighted empirical quantile function is
\begin{equation}
	\widehat Q_{X_1, n}(u)
	=
	\inf \left\{
	x_1:
	\widehat F_{X_1,n}(x)\geq u_1
	\right\}, \; 0 < u_1 < 1.
	\label{eq:weighted-quantile-Y}
\end{equation}
Similarly,
\begin{equation}
	\widehat Q_{X_2, n}(u) = \inf \left\{x_2: \widehat F_{X_2, n}(x)\geq u_2 \right\}, \; 0 < u_2 < 1.
	\label{eq:weighted-quantile-X}
\end{equation}
See \citet{chambers1986estimating}, \cite{hall1999methods} and \cite{withers2010distribution} for details about the weighted empirical estimation.

\subsection{Conditional tail sample and estimator of $W_{12}$}

For a fixed $u_2\in(0,1)$, define the estimated threshold $\widehat t_{X_2, n}(u_2)
=
\widehat Q_{X_2,n}(u_2)$, with the estimated upper-tail set as $\widehat A_{2,n}(u_2)
=
\left\{
i:
X_{2i}>\widehat t_{X_2, n}(u_2)
\right\}$.  The use of the strict inequality in $\widehat A_{2,n}(u_2)$ means that observations exactly equal to the estimated threshold are excluded.  Let
\begin{equation}
	\widehat \Omega_{2,n}(u_2)
	=
	\sum_{i\in\widehat A_{2,n}(u_2)}
	\omega_i.
	\label{eq:Omega2}
\end{equation}
Then the normalized conditional weights are
\begin{equation}
	v_{i,n}^{(2)}
	=
	\frac{\omega_i}
	{\widehat\Omega_{2,n}(u_2)},
	\qquad
	i\in\widehat A_{2,n}(u_2).
	\label{eq:conditional-weights}
\end{equation}
The weighted empirical conditional distribution of $X_1$ given the upper-tail event of $X_2$ is therefore
\begin{equation}
	\widehat F_{12, n}(x_1 | Q_2(u_2)) = \frac{\sum_{i=1}^n \omega_i \{X_{2i} > \widehat Q_{X_2,n}(u_2)\} \{X_{1i}\leq x_1\}}{\sum_{i=1}^n \omega_i \{X_{2i} > \widehat Q_{X_2, n}(u_2)\}}.
	\label{eq:F12hat}
\end{equation}
The corresponding weighted empirical conditional quantile is
\begin{equation}
	\widehat Q_{12,n}(u_1, u_2)
	=
	\inf
	\left\{
	x_1:
	\widehat F_{12,n}\left(x_1 | Q_2 (u_2)\right) \geq u_1
	\right\}.
	\label{eq:Q12hat}
\end{equation}
Using \eqref{eq:Q12hat}, a natural estimator of $W_{12}$ then becomes
\begin{equation}
	\widehat W_{12,n}(u_1,u_2)
	=
	\frac{
		\displaystyle
		\sum_{i=1}^n
		\omega_i
		\mathbb{I}\left\{
		X_{2i} > \widehat Q_{X_2,n}(u_2)
		\right\}
		\left[
		X_{1i}-\widehat Q_{12,n}(u_1, u_2)
		\right]_+
	}{
		\displaystyle
		\sum_{i=1}^n
		\omega_i
		\mathbb{I}\left\{
		X_{2i} > \widehat Q_{X_2,n}(u_2)
		\right\}
	}.
	\label{eq:W12hat-positive}
\end{equation}
Equivalently,
\begin{equation}
	\widehat W_{12,n}(u_1,u_2)
	=
	\sum_{i\in\widehat A_{2,n}(u_2)}
	v_{i,n}^{(2)}
	\left[
	X_i-\widehat Q_{12,n}(u_1, u_2)
	\right]_+.
	\label{eq:W12hat-normalized}
\end{equation}

\subsection{Finite-sample summation representation}

Let the observations in the estimated conditional tail sample be
ordered as $X_{(1)}^{(2)}
\leq
X_{(2)}^{(2)}
\leq\cdots\leq
X_{(m)}^{(2)}$, where $m=|\widehat A_{2,n}(u_2)|$.  Let the corresponding normalized weights be $v_1, v_2, \cdots, v_m,$ and define $V_j=\sum_{\ell=1}^j v_\ell$.  Let $k = \min\{j:V_j\geq u_1\}$.  Then $\widehat Q_{12,n}(u_1\mid u_2) = X_{(k)}^{(2)}$.  The integral of the empirical conditional quantile function is
\begin{equation}
	\int_{u_1}^{1}
	\widehat Q_{12,n}(u_1, u_2)\,dp
	=
	X_{(k)}^{(2)}(V_k-u_1)
	+
	\sum_{j=k+1}^{m}
	X_{(j)}^{(2)}v_j.
	\label{eq:integral-empirical-quantile}
\end{equation}
Consequently,
\begin{equation}\label{eq:W12hat-sum}
	\widehat W_{12,n}(u_1,u_2)
	= \int_{u_1}^{1}
	\widehat Q_{12,n}(p_1, u_2)\,dp_1
	-
	(1-u_1)
	\widehat Q_{12,n}(u_1, u_2) =
	\sum_{j=k+1}^{m}
	\left(
	X_{(j)}^{(2)}
	-
	X_{(k)}^{(2)}
	\right)v_j.
\end{equation}
Thus, the empirical BEWT can be computed exactly from the weighted
conditional order statistics without numerical integration.

\begin{rem}
	Equation \eqref{eq:W12hat-sum} is equivalent to the positive-part
	representation in \eqref{eq:W12hat-positive}. The positive-part form is
	particularly convenient for implementation, whereas the order-statistic
	form makes the connection with the empirical integrated quantile
	representation explicit.\\
\end{rem}
The estimator of $W_{21}$ can be obtained in a similar manner, by defining the tail threshold, $\widehat t_{X_1, n}(u_1) = \widehat Q_{X_1, n}(u_1)$ with upper tail set, $\widehat A_{1,n}(u_1) = \left\{
i: X_{1i} > \widehat Q_{X_1, n}(u_1)\right\}$.  Then
\begin{equation}
	\widehat W_{21,n}(u_1, u_2)
	=
	\frac{
		\displaystyle
		\sum_{i=1}^n
		\omega_i
		\mathbb{I}\left\{
		X_{1i}>\widehat Q_{X_1, n}(u_1)
		\right\}
		\left[
		Y_i-\widehat Q_{21,n}(u_1, u_2)
		\right]_+
	}{
		\displaystyle
		\sum_{i=1}^n
		\omega_i
		\mathbb{I}\left\{
		X_{1i}>\widehat Q_{X_1, n}(u_1)
		\right\}
	}.
	\label{eq:W21hat}
\end{equation}
Equivalently,
\begin{equation}
	\widehat W_{21,n}(u_1,u_2)
	=
	\int_{u_2}^{1}
	\widehat Q_{21,n}(u_1, p_2)\,dp_2
	-
	(1-u_2)
	\widehat Q_{21,n}(u_1, u_2) =
	\sum_{j=k+1}^{m}
	\left(
	X_{(j)}^{(1)}
	-
	X_{(k)}^{(1)}
	\right)v_j.
	\label{eq:W21hat-integrated}
\end{equation}

\subsection{Consistency of the BEWT Estimators}

We now establish consistency of the proposed estimators. Let
$0<\epsilon<\frac12$ and consider quantile indices in the compact set $\mathcal U_\epsilon=[\epsilon,1-\epsilon]$.  The restriction away from zero and one avoids unstable marginal tail
thresholds and ensures that the conditioning events have probabilities
bounded away from zero.

\subsubsection{Assumptions}

\begin{ass}[Positive and regular sampling weights]
	\label{ass:weights}
	The sampling weights satisfy $0<c_1\leq \omega_i\leq c_2<\infty$ with probability tending to one, for some constants $c_1,c_2>0$.  Moreover, the weighted empirical averages satisfy a weighted law of
	large numbers.
\end{ass}

\begin{ass}[Marginal continuity]
	\label{ass:marginal}
	The marginal distribution functions $F_X$ and $F_Y$ are continuous
	and strictly increasing in neighborhoods of $\{Q_X(u):u\in\mathcal U_\epsilon\}$ and $\{Q_Y(u):u\in\mathcal U_\epsilon\}$.
\end{ass}

\begin{ass}[Conditional continuity]
	\label{ass:conditional}
	For every $u_1, u_2 \in \mathcal U_\epsilon$, the conditional distributions $F_{ij}\left(x_i | X_j > Q_j (u_j)\right), \; i, j = 1, 2, \; i \neq j$ are continuous and strictly increasing on the relevant support.
\end{ass}


\begin{ass}[Uniform integrability]
	\label{ass:moment}
	There exists $\delta>0$ such that 
	\[
	\sup_{u_2\in\mathcal U_\epsilon} \E\left[|X_1|^{1+\delta} | X_2 > Q_2(u_2)\right] <\infty \quad \text{and} \quad \sup_{u_1\in\mathcal U_\epsilon}
	E\left[|X_2|^{1+\delta} | X_1 > Q_1(u_1)\right] <\infty.
	\]
\end{ass}

\begin{ass}[Uniform weighted empirical-process consistency]
	\label{ass:empirical}
	The weighted empirical distribution functions satisfy 
	\[
	\sup_x \left| \widehat F_{X_1, n}(x)- F_{X_1}(x) \right| \xrightarrow{p} 0 \quad \text{and} \quad \sup_x \left| \widehat F_{X_2, n}(x) - F_{X_2}(x) \right| \xrightarrow{p} 0.
	\]
	Furthermore, the corresponding weighted empirical conditional distribution functions satisfy
	\[
	\sup_{\substack{x\in \R \\u_2\in\mathcal U_\epsilon}} \left| \widehat F_{12, n}\left(x_1 | Q_2 (u_2)\right)
	- F_{12}\left(x_1 | Q_2 (u_2)\right) \right| \xrightarrow{p} 0
	\]
	and 
	\[
	\sup_{\substack{y\in \R \\u_1\in\mathcal U_\epsilon}}
	\left|
	\widehat F_{21,n}\left(x_2 | Q_1 (u_1)\right) -
	F_{21}\left(x_2 | Q_1 (u_1)\right)
	\right| \xrightarrow{p}0.
	\]
\end{ass}

\begin{lem}
	Under Assumptions~\ref{ass:weights}--\ref{ass:empirical},
	\[
	\sup_{u_1\in\mathcal U_\epsilon}
	\left| \widehat Q_{X_1, n}(u_1) - Q_1(u_1)
	\right| \xrightarrow{p}0 \quad \text{and} \quad \sup_{u_2\in\mathcal U_\epsilon} \left|
	\widehat Q_{X_2, n}(u_2) - Q_2(u_2) \right| \xrightarrow{p}0
	\]
	Further, for every fixed $\eta>0$, 
	\begin{equation}
		\sup_{\substack{
				u_1 \in [\epsilon,1-\eta]\\
				u_2\in\mathcal U_\epsilon
		}}
		\left| \widehat Q_{12,n}(u_1, u_2) - Q_{12}(u_1, u_2) \right| \xrightarrow{p}0,
		\label{eq:Q12-uniform}
	\end{equation}
	and 
	\begin{equation}
		\sup_{\substack{u_2\in[\epsilon,1-\eta]\\
				u_1\in\mathcal U_\epsilon}}
		\left| \widehat Q_{21,n}(u_1, u_2) -
		Q_{21}(u_1, u_2) \right| \xrightarrow{p}0.
		\label{eq:Q21-uniform}
	\end{equation}
\end{lem}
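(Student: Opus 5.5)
The plan is to reduce the whole statement to the classical fact that uniform convergence of distribution functions implies uniform convergence of quantiles on a compact set where the limit distribution is continuous and strictly increasing. I would first prove this as a deterministic lemma. Let $G_n$, $G$ be distribution functions, let $K\subset(0,1)$ be compact, and suppose $G$ is continuous and strictly increasing on a neighbourhood of $G^{-1}(K)$. If $\sup_x|G_n(x)-G(x)|\le\delta_n\to0$, then $\sup_{u\in K}|G_n^{-1}(u)-G^{-1}(u)|\to0$.

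The proof of the lemma goes as follows. Fix $\gamma>0$ and put
\[
\kappa(\gamma)=\inf_{u\in K}\min\{G(G^{-1}(u)+\gamma)-u,\;u-G(G^{-1}(u)-\gamma)\}.
\]
By strict monotonicity, continuity and compactness of $K$, we have $\kappa(\gamma)>0$. Whenever $\delta_n<\kappa(\gamma)$, two inequalities hold:
\[
G_n(G^{-1}(u)+\gamma)\ge G(G^{-1}(u)+\gamma)-\delta_n>u,
\qquad
G_n(G^{-1}(u)-\gamma)<u .
\]
Together with the definition of the generalized inverse, these give $|G_n^{-1}(u)-G^{-1}(u)|\le\gamma$ uniformly in $u\in K$. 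Since the hypothesis holds in probability, the conclusion holds in probability: the event $\{\delta_n<\kappa(\gamma)\}$ has probability tending to one.

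Applying this lemma with $G_n=\widehat F_{X_i,n}$, $G=F_{X_i}$ and $K=\mathcal U_\epsilon$ yields the two marginal statements at once. The first part of Assumption~\ref{ass:empirical} supplies uniform convergence of the distribution functions, and Assumption~\ref{ass:marginal} supplies continuity and strict monotonicity.

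For \eqref{eq:Q12-uniform} I would run the same argument jointly in $(u_1,u_2)\in K_\eta=[\epsilon,1-\eta]\times\mathcal U_\epsilon$, now with $G_{u_2}(x)=F_{12}(x\mid Q_2(u_2))$. There are three steps.

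\textbf{Step one} is the uniform positive gap $\kappa(\gamma)>0$ over $K_\eta$. I would obtain it from joint continuity of $(x,u_2)\mapsto F_{12}(x\mid Q_2(u_2))=P(X_1\le x,X_2>Q_2(u_2))/(1-u_2)$. This map is continuous because $F$ is continuous and $1-u_2\ge\epsilon$. Joint continuity of $Q_{12}$ on $K_\eta$ follows from strict monotonicity (Assumption~\ref{ass:conditional}), so a compactness argument applies. The cap at $1-\eta$ is what keeps $Q_{12}(u_1,u_2)$ finite and bounded on $K_\eta$, which is why the result is stated only there.

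\textbf{Step two} handles the fact that the estimator $\widehat Q_{12,n}$ conditions on the \emph{estimated} threshold $\widehat Q_{X_2,n}(u_2)$, not on $Q_2(u_2)$. I would bound
\[
\sup_{x,u_2}\bigl|\widehat F_{12,n}(x\mid\widehat Q_{X_2,n}(u_2))-F_{12}(x\mid Q_2(u_2))\bigr|
\]
in two parts. The first part is the estimation error at a fixed threshold, controlled by Assumption~\ref{ass:empirical}. The second part is the threshold perturbation, $|F_{12}(x\mid t)-F_{12}(x\mid Q_2(u_2))|$ with $t=\widehat Q_{X_2,n}(u_2)$. This is bounded uniformly in $x$ by a constant times
\[
|\bar F_2(t)-\bar F_2(Q_2(u_2))|/\epsilon ,
\]
which tends to zero by the marginal result just proved together with uniform continuity of $F_2$.

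\textbf{Step three} applies the deterministic lemma and gives \eqref{eq:Q12-uniform}. Then \eqref{eq:Q21-uniform} follows by symmetry, interchanging the roles of $X_1$ and $X_2$.

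I expect step two to be the main obstacle. It requires making the conditional empirical convergence genuinely uniform over the random threshold, and ensuring the denominator $\widehat\Omega_{2,n}(u_2)$ stays bounded away from zero. For the latter I would use Assumption~\ref{ass:weights} and the fact that $P(X_2>Q_2(u_2))\ge\epsilon$, so the denominator is bounded away from zero with probability tending to one.
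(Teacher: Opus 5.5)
Your proposal is correct and follows the same route as the paper: uniform convergence of the (conditional) weighted empirical distribution functions, combined with continuity and strict monotonicity of the limits, is passed through an inverse-map argument, and your compactness bound on $\kappa(\gamma)$ supplies the details the paper only asserts. One minor remark is that the paper's $\widehat F_{12,n}(x_1\mid Q_2(u_2))$ in \eqref{eq:F12hat} is already built on the estimated threshold $\widehat Q_{X_2,n}(u_2)$, so Assumption~\ref{ass:empirical} as literally stated covers your step two; your threshold-perturbation bound is needed only if that assumption is read at the true threshold $Q_2(u_2)$, and then, as you note, it must hold uniformly over thresholds near $Q_2(\mathcal U_\epsilon)$.
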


\begin{proof}
	The result follows from the uniform convergence of the weighted empirical distribution functions and the continuity and strict monotonicity of the marginal distribution functions on the relevant quantile ranges. The inverse-map theorem for quantile functions further provides the uniform convergence of the empirical quantiles.
	
	The convergence of conditional quantile functions follows from the uniform convergence of the conditional empirical distribution functions and the continuity and strict
	monotonicity of the corresponding conditional distribution functions.  The restriction $p\leq1-\eta$ avoids the upper endpoint where quantile estimation can be unstable.
\end{proof}

For the BEWT, we establish convergence in the integrated sense.
\begin{thm}
	Under the assumptions~\ref{ass:weights}--\ref{ass:moment}, for every fixed $u_1 \in (0,1)$,
	\begin{equation}
		\sup_{u_2\in\mathcal U_\epsilon}
		\int_{u_1}^{1}
		\left|
		\widehat Q_{12, n}(p_1,  u_2) - Q_{12}(p_1, u_2)
		\right|\,dp_1
		\xrightarrow{p}0.
		\label{eq:integrated-Q12}
	\end{equation}
	and
	\begin{equation}
		\sup_{u_1\in\mathcal U_\epsilon}
		\int_{u_2}^{1}
		\left|
		\widehat Q_{21, n}(u_1, p_2) - Q_{21}(u_1, p_2)
		\right|\,dp_2
		\xrightarrow{p}0.
		\label{eq:integrated-Q21}
	\end{equation}
\end{thm}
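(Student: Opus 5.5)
The plan is the standard truncation argument: split the interval $[u_1,1]$ at $1-\eta$, handle the bulk with the uniform quantile convergence of the preceding lemma, and control the tail with the moment condition in Assumption~\ref{ass:moment}. Fix $u_1\in(0,1)$ and $\eta>0$ small with $1-\eta>u_1$ (and, if $u_1<\epsilon$, treat $[u_1,\epsilon]$ together with the bulk; there one can use the bound $|\widehat Q_{12,n}-Q_{12}|\le |\widehat Q_{12,n}(\epsilon,\cdot)|+|Q_{12}(\epsilon,\cdot)|+\ldots$, or simply assume $u_1\in\mathcal U_\epsilon$ as in the lemma). Write
\[
\int_{u_1}^{1}\bigl|\widehat Q_{12,n}-Q_{12}\bigr|dp_1=\int_{u_1}^{1-\eta}\bigl|\widehat Q_{12,n}-Q_{12}\bigr|dp_1+\int_{1-\eta}^{1}\bigl|\widehat Q_{12,n}-Q_{12}\bigr|dp_1 = I_n+J_n .
\]
By \eqref{eq:Q12-uniform}, $\sup_{u_2\in\mathcal U_\epsilon} I_n\le (1-\eta-u_1)\sup|\widehat Q_{12,n}-Q_{12}|\xrightarrow{p}0$.

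For the tail, bound $J_n\le \int_{1-\eta}^1 |\widehat Q_{12,n}(p_1,u_2)|dp_1+\int_{1-\eta}^1 |Q_{12}(p_1,u_2)|dp_1$. For the population term I will use the quantile-transform identity $\int_{1-\eta}^1 |Q_{12}(p,u_2)|dp=\E[|X_1|\mathbb I\{F_{12}(X_1|Q_2(u_2))>1-\eta\} | X_2>Q_2(u_2)]$ and H\"older with exponent $1+\delta$, giving the bound $\eta^{\delta/(1+\delta)}M^{1/(1+\delta)}$ with $M$ the supremum in Assumption~\ref{ass:moment}; this is uniform in $u_2$. For the empirical term, the same identity holds with the weighted conditional empirical measure, and $\int_{1-\eta}^1|\widehat Q_{12,n}|dp\le \bigl(\int_0^1|\widehat Q_{12,n}|^{1+\delta}dp\bigr)^{1/(1+\delta)}\eta^{\delta/(1+\delta)}$, where $\int_0^1|\widehat Q_{12,n}|^{1+\delta}dp=\sum_i v_{i,n}^{(2)}|X_{1i}|^{1+\delta}$.

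The main obstacle is to show that this last weighted conditional moment is $O_p(1)$ uniformly in $u_2$. I would bound it by
\[
\frac{\sum_i\omega_i|X_{1i}|^{1+\delta}}{\sum_i\omega_i\mathbb I\{X_{2i}>\widehat Q_{X_2,n}(u_2)\}}\le \frac{\sum_i\omega_i|X_{1i}|^{1+\delta}}{\sum_i\omega_i\mathbb I\{X_{2i}>\widehat Q_{X_2,n}(1-\epsilon)\}}.
\]
The numerator over $\sum_i\omega_i$ is $O_p(1)$ by the weighted law of large numbers in Assumption~\ref{ass:weights}, since $\E|X_1|^{1+\delta}\le M/(\epsilon)$-type bounds follow from Assumption~\ref{ass:moment} at $u_2=\epsilon$ (because $E[|X_1|^{1+\delta}]\le E[|X_1|^{1+\delta}|X_2>Q_2(\epsilon)]+\ldots$; alternatively one restricts to the conditioning set directly). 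The denominator over $\sum_i\omega_i$ tends to $\epsilon>0$ by Assumption~\ref{ass:empirical} and continuity of $F_{X_2}$. Hence $\sup_{u_2}J_n\le C_n\eta^{\delta/(1+\delta)}$ with $C_n=O_p(1)$.

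Combining the two pieces, for any $\gamma>0$ choose $\eta$ so that $C\eta^{\delta/(1+\delta)}<\gamma/2$ on an event of probability close to one, then let $n\to\infty$ in $I_n$; this yields \eqref{eq:integrated-Q12}. The statement \eqref{eq:integrated-Q21} follows identically by interchanging the roles of $X_1$ and $X_2$ and using \eqref{eq:Q21-uniform}.
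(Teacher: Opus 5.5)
Your skeleton matches the paper's proof: split at $1-\eta$, handle the bulk with \eqref{eq:Q12-uniform}, and use Assumption~\ref{ass:moment} for the tail. Your split of $J_n$ into a population piece and an empirical piece, with the H\"older bound $\eta^{\delta/(1+\delta)}M^{1/(1+\delta)}$ for the population piece, is correct. It also fills a hole the paper leaves open, since the paper's proof never addresses the empirical tail. The gap lies in the step you yourself flag as the main obstacle. Assumption~\ref{ass:moment} does not give $\E|X_1|^{1+\delta}<\infty$: it controls only $\E\bigl[|X_1|^{1+\delta}\,\mathbb I\{X_2>Q_2(\epsilon)\}\bigr]$. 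The term hidden in your ``$+\ldots$'', namely $\E\bigl[|X_1|^{1+\delta}\,\mathbb I\{X_2\le Q_2(\epsilon)\}\bigr]$, is not constrained by any assumption. For example, take $X_2$ uniform on $(0,1)$, and let $X_1$ be exponential given $X_2\ge\epsilon$ and Pareto with index in $(1,1+\delta)$ given $X_2<\epsilon$. Then Assumption~\ref{ass:moment} holds but $\E|X_1|^{1+\delta}=\infty$, so the weighted LLN cannot be applied to $|X_1|^{1+\delta}$. Your fallback of restricting to the conditioning set also fails. The random threshold $\widehat Q_{X_2,n}(\epsilon)$ can lie below $Q_2(\epsilon)$, so $\widehat A_{2,n}(\epsilon)$ can contain observations from a region with no $(1+\delta)$-moment control. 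With only consistency of the threshold and no rate, you cannot show that their $(1+\delta)$-th powers are negligible after normalization. The claimed uniform $O_p(1)$ bound on $\sum_i v_{i,n}^{(2)}|X_{1i}|^{1+\delta}$ is therefore unproved.

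The repair is to drop H\"older for the empirical piece and truncate at the level of the first moment instead. For $K>0$ and all $u_2\in\mathcal U_\epsilon$,
\[
\int_{1-\eta}^{1}\bigl|\widehat Q_{12,n}(p_1,u_2)\bigr|\,dp_1\le K\eta+\sum_{i\in\widehat A_{2,n}(u_2)}v_{i,n}^{(2)}|X_{1i}|\,\mathbb I\{|X_{1i}|>K\}\le K\eta+\frac{\sum_{i=1}^n\omega_i|X_{1i}|\,\mathbb I\{|X_{1i}|>K\}}{\sum_{i=1}^n\omega_i\,\mathbb I\{X_{2i}>\widehat Q_{X_2,n}(1-\epsilon)\}},
\]
where the last step is your own monotone-threshold argument. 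After normalizing by $\sum_i\omega_i$, the denominator tends to $\epsilon$. The weighted LLN sends the numerator to $\E\bigl[|X_1|\,\mathbb I\{|X_1|>K\}\bigr]$, which vanishes as $K\to\infty$ because $\E(X_1)<\infty$ is a standing hypothesis in Definitions~\ref{def21} and~\ref{def22}. Choose $K$ first, then $\eta$. The same truncation, using $P(X_2>Q_2(u_2))=1-u_2\ge\epsilon$, also handles the population piece. Alternatively, you can keep your H\"older route if Assumption~\ref{ass:moment} is strengthened to hold on $[\epsilon',1-\epsilon]$ for some $\epsilon'<\epsilon$. Then $\widehat Q_{X_2,n}(\epsilon)>Q_2(\epsilon')$ with probability tending to one, and the LLN can be applied on the deterministic set $\{X_2>Q_2(\epsilon')\}$. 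Finally, your remark on $u_1<\epsilon$ is not an argument: the displayed bound shows that $|\widehat Q_{12,n}-Q_{12}|$ is bounded on $[u_1,\epsilon]$, not that it is small. Either restrict to $u_1\in\mathcal U_\epsilon$, as the paper tacitly does since \eqref{eq:Q12-uniform} only covers $p_1\in[\epsilon,1-\eta]$, or rerun the lemma's inverse-map argument on $[u_1,1-\eta]$.
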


\begin{proof}
	Fix $\eta>0$ and decompose
	\begin{align}
		\int_{u_1}^{1}
		\left|
		\widehat Q_{12,n}(p_1, u_2)
		-
		Q_{12}(p_1, u_2)
		\right|dp_1
		&=
		\int_{u_1}^{1-\eta}
		\left|
		\widehat Q_{12,n}(p_1, u_2)
		-
		Q_{12}(p_1,u_2)
		\right|dp_1
		\nonumber\\
		&\quad+
		\int_{1-\eta}^{1}
		\left|
		\widehat Q_{12, n}(p_1, u_2)
		-
		Q_{12}(p_1, u_2)
		\right|dp.
	\end{align}
	For the first integral, $\int_{u_1}^{1-\eta}
	\left| \widehat Q_{12, n}(p_1, u_2) - Q_{12}(p_1, u_2)
	\right|dp_1$ is bounded by 
	\[
	(1-u_1)
	\sup_{\substack{
			p\in[u_1,1-\eta]\\
			u_2\in\mathcal U_\epsilon
	}}
	\left|
	\widehat Q_{12,n}(u_1, u_2)
	-
	Q_{12}(u_1, u_2)
	\right|,
	\]
	which converges to zero in probability by
	\eqref{eq:Q12-uniform}.  For the upper tail interval $(1-\eta,1)$, Assumption~\ref{ass:moment}
	provides uniform integrability of the conditional distributions.  Consequently, the contribution from this interval can be made arbitrarily small by choosing $\eta$ sufficiently small.  Hence,
	\[
	\sup_{u_2\in\mathcal U_\epsilon}
	\int_{u_1}^{1}
	\left|
	\widehat Q_{12,n}(p_1, u_2)
	-
	Q_{12}(p_1, u_2)
	\right|dp_1
	\xrightarrow{p}0.
	\]
	The argument for $Q_{21}$ is identical.
\end{proof}

We can now establish the main consistency result.
\begin{thm}
	\label{thm:W12-consistency}
	Suppose that Assumptions~\ref{ass:weights}--\ref{ass:empirical}
	hold. 
	Then
	\begin{equation}
		\sup_{u_2\in\mathcal U_\epsilon}
		\left|
		\widehat W_{12,n}(u_1,u_2)
		-
		W_{12}(u_1,u_2)
		\right|
		\xrightarrow{p}0.
		\label{eq:W12-uniform-consistency}
	\end{equation}
	\begin{equation}
		\sup_{u_1\in\mathcal U_\epsilon}
		\left|
		\widehat W_{21,n}(u_1,u_2)
		-
		W_{21}(u_1,u_2)
		\right|
		\xrightarrow{p}0.
		\label{eq:W21-uniform-consistency}
	\end{equation}
\end{thm}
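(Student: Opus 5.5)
We use the integrated-quantile representation \eqref{2.10} together with its empirical counterpart \eqref{eq:W12hat-sum}. For fixed $u_1$ and every $u_2\in\mathcal U_\epsilon$, subtracting the two gives
\[
\bigl|\widehat W_{12,n}(u_1,u_2)-W_{12}(u_1,u_2)\bigr|
\le \int_{u_1}^{1}\bigl|\widehat Q_{12,n}(p_1,u_2)-Q_{12}(p_1,u_2)\bigr|\,dp_1
+(1-u_1)\bigl|\widehat Q_{12,n}(u_1,u_2)-Q_{12}(u_1,u_2)\bigr|.
\]
We then take the supremum over $u_2\in\mathcal U_\epsilon$ on both sides and show that each term tends to zero in probability. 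The argument for $W_{21}$ is the same with the roles of the coordinates interchanged, using \eqref{eq:integrated-Q21} and \eqref{eq:Q21-uniform}.

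The first term is handled directly by the preceding theorem, \eqref{eq:integrated-Q12}, which gives convergence of this integrated error uniformly in $u_2\in\mathcal U_\epsilon$.

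For the second term we need the fixed level $u_1$ to lie in an interval $[\epsilon,1-\eta]$ for some $\eta>0$. We assume, as the statement implicitly does, that $u_1\in\mathcal U_\epsilon$, and choose $\eta=\epsilon$. Then \eqref{eq:Q12-uniform} applies and gives $\sup_{u_2\in\mathcal U_\epsilon}|\widehat Q_{12,n}(u_1,u_2)-Q_{12}(u_1,u_2)|\xrightarrow{p}0$. Since the factor $(1-u_1)$ is at most one, the second term vanishes as well.

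The main obstacle is a technical point hidden in the first step: the empirical conditional distribution \eqref{eq:F12hat} conditions on the estimated threshold $\widehat Q_{X_2,n}(u_2)$, not on the true $Q_2(u_2)$. We must therefore check that the uniform convergence in Assumption~\ref{ass:empirical}, and hence in the preceding lemma and theorem, really covers this plug-in estimator. We would argue as follows. The lemma gives $\sup_{u_2}|\widehat Q_{X_2,n}(u_2)-Q_2(u_2)|\xrightarrow{p}0$. By continuity of $F_2$ (Assumption~\ref{ass:marginal}), the weighted proportion of the sample falling between the true and the estimated threshold tends to zero uniformly in $u_2$. Because the conditioning probabilities $1-u_2\ge\epsilon$ are bounded away from zero, both the numerator and the denominator of \eqref{eq:F12hat} change by $o_p(1)$ uniformly when the estimated threshold is replaced by the true one, so the conditional distribution function does too.

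For the tail portion of the integral, the same threshold replacement must be checked. Here the moment bound of Assumption~\ref{ass:moment}, combined with the weighted law of large numbers and the weight bounds of Assumption~\ref{ass:weights}, makes the empirical tail contribution $\int_{1-\eta}^1\widehat Q_{12,n}\,dp_1$ uniformly small. Concretely, by H\"older's inequality it is bounded by a constant times $\eta^{\delta/(1+\delta)}$ with probability tending to one, which justifies the tail step used in the preceding theorem.
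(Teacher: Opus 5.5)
Your proposal is correct and follows the paper's proof essentially verbatim. You use the same triangle-inequality split of the integrated-quantile representations \eqref{2.10} and \eqref{eq:W12hat-sum}, control the integrated term by \eqref{eq:integrated-Q12} and the pointwise term by \eqref{eq:Q12-uniform}; the latter needs $u_1\le 1-\eta$, which you rightly make explicit where the paper leaves it implicit. Your threshold-replacement check is not actually needed, because Assumption~\ref{ass:empirical} is stated for $\widehat F_{12,n}$ as defined in \eqref{eq:F12hat}, i.e.\ already with the estimated threshold; your H\"older tail bound fills in a step that the paper only asserts in the proof of the preceding integrated-quantile theorem, not in this one.
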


\begin{proof}
	Using the integrated-quantile representations,
	\begin{align}
		\left|
		\widehat W_{12,n}(u_1,u_2)
		-
		W_{12}(u_1,u_2)
		\right|
		&\leq 
		\left|
		\int_{u_1}^{1}
		\left[
		\widehat Q_{12, n}(p_1, u_2)
		-
		Q_{12}(p_1, u_2)
		\right]dp_1
		\right|
		\nonumber\\
		&\quad+
		(1-u_1)
		\left|
		\widehat Q_{12, n}(u_1, u_2)
		-
		Q_{12}(u_1, u_2)
		\right|
		\nonumber\\
		&\leq
		\int_{u_1}^{1}
		\left|
		\widehat Q_{12, n}(p_1, u_2)
		-
		Q_{12}(p_1, u_2)
		\right|dp_1
		\nonumber\\
		&\quad+
		(1-u_1)
		\left|
		\widehat Q_{12, n}(u_1, u_2)
		-
		Q_{12}(u_1, u_2)
		\right|.
	\end{align}
	Taking the supremum over $u_2\in\mathcal U_\epsilon$ gives
	\begin{align}
		\sup_{u_2\in\mathcal U_\epsilon}
		\left|
		\widehat W_{12,n}(u_1,u_2)
		-
		W_{12}(u_1,u_2)
		\right|
		&\leq 
		\sup_{u_2\in\mathcal U_\epsilon}
		\int_{u_1}^{1}
		\left|
		\widehat Q_{12,n}(p_1, u_2)
		-
		Q_{12}(p_1, u_2)
		\right|dp_1
		\nonumber\\
		&\quad+
		(1-u_1)
		\sup_{u_2\in\mathcal U_\epsilon}
		\left|
		\widehat Q_{12,n}(u_1, u_2)
		-
		Q_{12}(u_1, u_2)
		\right|.
	\end{align}
	
	Both terms converge to zero in probability. Hence
	\eqref{eq:W12-uniform-consistency} follows.
	The proof of \eqref{eq:W21-uniform-consistency} follows immediately by interchanging the roles of $X_1$ and
	$X_2$.
\end{proof}

\section{Simulation Study}
To examine the finite-sample behavior of the proposed nonparametric estimator of the new BEWT, a Monte Carlo simulation study was conducted under the bivariate Pareto model \eqref{Pareto II} introduced in Example \ref{ex22}. The distribution was chosen due to its ability to
capture the heavy-tailed characteristics commonly observed in income and wealth data. The estimator was evaluated in the probability combinations $(u_1, u_2)$ = (0.5, 0.5), (0.7, 0.7), (0.9, 0.9), (0.5, 0.7), (0.7, 0.9), and (0.5, 0.9) corresponding to lower, central, and upper regions of the conditional distribution. For each setting, 500 Monte Carlo replications were generated for sample sizes $n$ = 100, 250, 500, 1000 and 2000. The resulting estimates of $\widehat W_{12}(u_1,u_2)$ and 
$\widehat W_{21}(u_1,u_2)$ were
assessed using absolute bias and mean squared error (MSE). The simulation results are summarized in Table \ref{tab:simulation_W12}. It shows that the Monte Carlo means of
$\widehat{W}_{12}$ approach their corresponding theoretical values as
the sample size increases. In general, both the absolute bias and RMSE
decrease with increasing sample size, indicating improved finite-sample
performance of the estimator. This behavior is observed for both equal
and unequal quantile levels.  The simulation results provide numerical evidence supporting the consistency of the proposed empirical BEWT estimators.

\begin{table}[htbp]
	\small
	\centering
	\caption{Monte Carlo performance of the empirical BEWT estimator
		$\widehat W_{12}(u_1,u_2)$ and 
		$\widehat W_{21}(u_1,u_2)$ under the bivariate Pareto II model in \eqref{Pareto II} for different
		sample sizes and quantile levels.}
	\label{tab:simulation_W12}
	\begin{tabular}{cccccccc}
		\toprule
		$(u_1, u_2)$ & $n$ & Mean($(\widehat W_{12},\widehat W_{21}) $) & Bias &  RMSE \\
		\midrule
		(0.50, 0.50) & 100 & (0.2907, 0.2886) & (0.0091, 0.0070) &                   (0.0385, 0.0382) \\
		& 250 & (0.2836, 0.02853) & (0.0021, 0.0037) & (0.0220, 0.0226) \\
		& 500 & (0.2834, 0.2842) & (0.0018, 0.0026) & (0.0156, 0.0150) \\    
		& 1000 & (0.2824, 0.2821) & (0.0008, 0.0006) & (0.0113, 0.0114) \\
		& 2000 & (0.2826, 0.2826) & (0.0011, 0.0011) & (0.0075, 0.0074) \\
		(0.70, 0.70) & 100 & (0.1769, 0.1780) & (0.0141, 0.0151) &                   (0.0513, 0.0519) \\
		& 250 & (0.1694, 0.1690) & (0.0066, 0.0061) & (0.0271, 0.0265) \\
		& 500 & (0.1664, 0.1653) & (0.0036, 0.0024) & (0.0177, 0.0184) \\
		& 1000 & (0.1648, 0.1649) & (0.0020, 0.0020) & (0.0126, 0.0129) \\
		& 2000 & (0.1633, 0.1630) & (0.0004, 0.0002) & (0.0093, 0.0081) \\
		(0.90, 0.90) & 100 & (----,----) & (----,----) & (----,----) \\
		& 250 & (0.0703, 0.0719) & (0.0145, 0.0160) & (0.0619, 0.0624) \\
		& 500 & (0.0618, 0.0637) & (0.0060, 0.0078) & (0.0242, 0.0265) \\
		& 1000 & (0.0593, 0.0587) & (0.0034, 0.0028) & (0.0143, 0.0151) \\
		& 2000 & (0.0578, 0.0580) & (0.0020, 0.0021) & (0.0097, 0.0102) \\
		(0.50, 0.70) & 100 & (0.3450, 0.1418) & (0.1785, 0.0056) &                   (0.0698, 0.0274) \\
		& 250 & (0.3365, 0.1393) & (0.0093, 0.0030) & (0.0398, 0.0151) \\
		& 500 & (0.3305, 0.1370) & (0.0034, 0.0007) & (0.0260, 0.0101) \\
		& 1000 & (0.3296, 0.1369) & (0.0025, 0.0006) & (0.0188, 0.0074) \\
		& 2000 & (0.3290, 0.1370) & (0.0019, 0.0007) & (0.0123, 0.0052) \\
		(0.70, 0.90) & 100 & (----, 0.0425) & (----, 0.0081) & (---                  -, 0.0301) \\
		& 250 & (0.2803, 0.0381) & (0.0312, 0.0038) & (0.0979, 0.0123) \\
		& 500 & (0.2679, 0.0365) & (0.0188, 0.0021) & (0.0649, 0.0079) \\
		& 1000 & (0.2594, 0.0357) & (0.0103, 0.0014) & (0.0387, 0.0049) \\
		& 2000 & (0.2525, 0.0348) & (0.0034, 0.0005) & (0.0276, 0.0031) \\
		(0.50, 0.90) & 100 & (----, 0.0323) & (----, 0.0045) & (---                  -,0.0131) \\
		& 250 & (0.5174, 0.0294) & (0.0423, 0.0017) & (0.1504, 0.0061) \\
		& 500 & (0.4902, 0.0282) & (0.0151, 0.0004) & (0.0817, 0.0037) \\
		& 1000 & (0.4864, 0.0280) & (0.0113, 0.0002) & (0.0602, 0.0026) \\
		& 2000 & (0.4809, 0.0280) & (0.0059, 0.0003) & (0.0396, 0.0019) \\
		\bottomrule
	\end{tabular}
	
	\begin{flushleft}
		\footnotesize
		Note: A dash (--) indicates that the empirical BEWT could not be
		computed because the conditional sample size was insufficient.
	\end{flushleft}
\end{table}

\section{Data Analysis}
To demonstrate the applicability of the proposed BEWT in the analysis of socioeconomic wealth
inequality, we consider microdata from the Eurosystem Household Finance and Consumption Survey (HFCS), coordinated by the European Central Bank (ECB) and the Household Finance and Consumption Network (HFCN). The HFCS is a harmonized household survey designed to collect detailed information on household finances and consumption, including assets, liabilities, income, private businesses, intergenerational transfers and saving behaviour. Its standardized core variables facilitate cross-country comparisons while allowing national institutions some flexibility in survey implementation.

The HFCS is particularly suitable for the present analysis because it provides household-level information on wealth and income, allowing the distribution and upper-tail behaviour of household economic resources to be examined directly. The survey is conducted at the national level under a common methodological framework, and anonymised microdata are made available to researchers. In this study, the HFCS data are used to illustrate the empirical behaviour of the proposed BEWT across selected European countries, with particular emphasis on the relationship between household wealth and income in the upper part of their respective distributions. The variable $X_1$ represents the net wealth and $X_2$ represents
the total household gross income.
The proposed nonparametric estimator was employed to estimate the excess wealth surfaces
$\widehat W_{12}(u_1,u_2)$ and 
$\widehat W_{21}(u_1,u_2)$ over a grid of probability values in the unit square. Selected
values of the estimated BEWT are reported in Table \ref{tab:w12-country} and \ref{tab:w21-country}.

\begin{table}[htbp]
	\centering
	\caption{Pooled $W_{12}$ estimates across countries (EUR million)}
	\label{tab:w12-country}
	\resizebox{\textwidth}{!}{%
		\begin{tabular}{lrrrrrr}
			\toprule
			Country & $(.50,.50)$ & $(.50,.70)$ & $(.50,.90)$ &
			$(.70,.70)$ & $(.70,.90)$ & $(.90,.90)$ \\
			\midrule
			Luxembourg   & 943752.5 & 1151902.9 &
			1919878.1 &
			877728.9 &
			1402277.4 &
			715066.7 \\
			Spain    & 335697.2 &
			427195.5 &
			693264.5 &
			334883.2 &
			551490.5 &
			403835.6 \\
			Germany     & 326003.3 &
			384882.9 &
			591906.9 &
			289352.5 &
			413931.5 &
			205495.6 \\
			Belgium     & 320882.6 &
			387300.6 &
			517109.7 &
			280205.4 &
			374160.1 &
			229959.9 \\
			France     & 317654.1 &
			402927.7 &
			623231.9 &
			295899.9 &
			450672.4 &
			255105.4 \\
			Italy  & 300489.3 &
			409956.8 &
			834456.8 &
			335905.8 &
			725787.9 &
			511203.3 \\
			\bottomrule
		\end{tabular}%
	}
\end{table}

Table \ref{tab:w12-country} reports the pooled estimates of \(W_{12}\) for the six selected European countries. The estimates exhibit substantial cross-country heterogeneity in the conditional excess net wealth structure. Luxembourg records considerably larger  excess net wealth values \((W_{12})\) than the other countries in all reported quantile pairs. For example, at \((u_1,u_2)=(0.50,0.50)\), the estimated \(W_{12}\) for Luxembourg is approximately €0.94 million, compared with approximately €0.32--€0.34 million for Belgium, Germany, Spain and France and €0.30 million for Italy. 

A clear pattern is also observed with respect to the conditioning quantile \(u_2\), as obtained from Table \ref{tab:w12-country}. Keeping \(u_1=0.50\) fixed, the \(W_{12}\) estimates increase as \(u_2\) moves from 0.50 to 0.90 for every country. For Luxembourg, for example, the estimate increases from approximately €0.94 million at \((0.50,0.50)\) to approximately €1.92 million at \((0.50,0.90)\). The corresponding increase is also evident for the other countries, although the magnitude differs substantially across countries.  This pattern indicates that the excess net wealth component measured by \(W_{12}\) becomes larger when the conditioning group is restricted to households above higher income quantiles.   The country comparison suggests that the conditional excess-wealth structure varies more strongly across countries when attention is concentrated on households above higher income thresholds. 


\begin{table}[!t]
	\centering
	\caption{Pooled $W_{21}$ estimates across countries (EUR million)}
	\label{tab:w21-country}
	\resizebox{\textwidth}{!}{%
		\begin{tabular}{lrrrrrr}
			\toprule
			Country & $(.50,.50)$ & $(.50,.70)$ & $(.50,.90)$ &
			$(.70,.70)$ & $(.70,.90)$ & $(.90,.90)$ \\
			\midrule
			Luxembourg    & 55713.51 &
			34453.88 &
			14399.084 &
			40746.35 &
			17594.258 &
			24551.542 \\
			Spain   & 18436.68 &
			11184.43 &
			4751.104 &
			13557.59 &
			5307.188 &
			9985.235 \\
			Germany       & 42597.67 &
			31289.72 &
			17797.918 &
			38272.87 &
			22692.216 &
			24232.410 \\
			Belgium     & 27820.34 &
			16367.67 &
			6244.871 &
			19731.52 &
			6530.747 &
			6612.846 \\
			France    & 24832.90 &
			18005.15 &
			10115.016 &
			23363.78 &
			13937.873 &
			26505.755 \\
			Italy & 23058.19 &
			15763.98 &
			8738.590 &
			20568.33 &
			11129.126 &
			18403.766 \\
			\bottomrule
		\end{tabular}%
	}
\end{table}
Table 3 reports the pooled estimates of the bivariate excess wealth transform \(W_{21}\) across the six countries. The estimates exhibit considerable cross-country heterogeneity in the conditional excess household income structure. At the \((0.50,0.50)\) quantile pair, Luxembourg records the largest value of \(W_{21}\), at approximately €55.71 million, followed by Germany at €42.60 million. Spain, Belgium, France and Italy exhibit comparatively smaller values, ranging from approximately €18.44 million to €27.82 million.

A clear pattern emerges when the wealth-conditioning quantile is held fixed at \(u_1=0.50\). As the income quantile \(u_2\) increases from 0.50 to 0.70 and then to 0.90, \(W_{21}(0.50,u_2)\) decreases in all six countries. For Luxembourg, for example, the estimate declines from €55.71 million at \((0.50,0.50)\) to €34.45 million at \((0.50,0.70)\) and €14.40 million at \((0.50,0.90)\). A similar decline is observed for Germany, where the corresponding estimates decrease from €42.60 million to €31.29 million and €17.80 million.  This pattern indicates that, when the wealth-conditioning threshold is fixed, the excess-income component decreases as progressively higher income thresholds are considered. The magnitude of this decline differs across countries, suggesting heterogeneity in the joint distribution of household wealth and income. In particular, Luxembourg maintains comparatively high \(W_{21}\) values over much of the considered range, whereas Spain generally records the smallest values.  Now Figures \ref{fig2} and \ref{fig3} provide an overall comparison of the components of BEWT ($W_{12}$ and $W_{21}$) of household net wealth and income, conditioned by different quartiles of household income and net wealth respectively. 

The comparison of \(W_{12}\) and \(W_{21}\) highlights the directional nature of the proposed BEWT. The \(W_{12}\) transform measures conditional excess wealth under an income-tail conditioning event, whereas \(W_{21}\) measures conditional excess income under a wealth-tail conditioning event. The substantially different magnitudes and cross-country patterns of the two transforms indicate that the conditional relationship between wealth and income is not symmetric. Hence, the pair \((W_{12},W_{21})\) provides information about the joint distribution that cannot be obtained by considering the marginal distributions of wealth and income separately. 
\begin{figure}
	\centering    \includegraphics[width=0.60\linewidth]{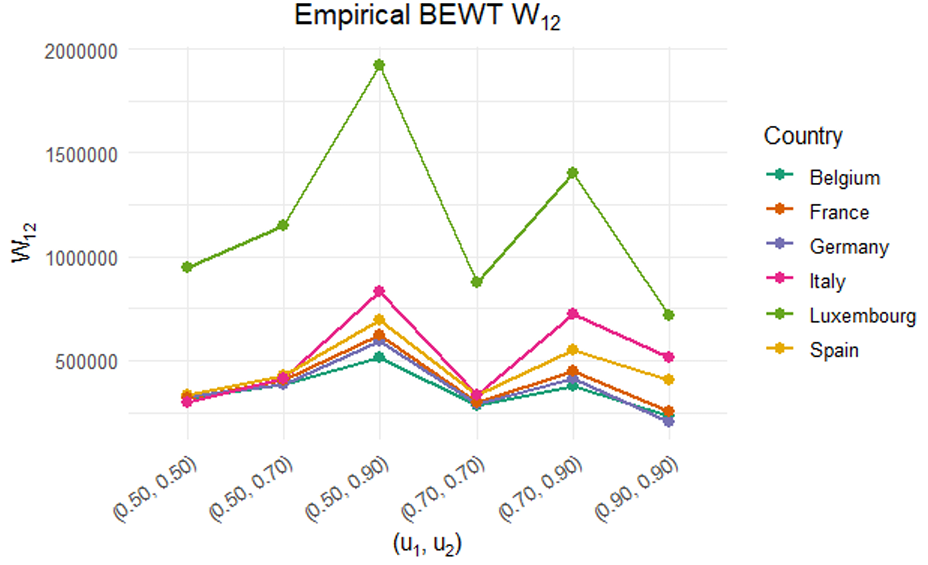}
	\caption{Empirical $W_{12}(u_1, u_2 )$ curves across the six countries}
	\label{fig2}
\end{figure}

\begin{figure}
	\centering    \includegraphics[width=0.6\linewidth]{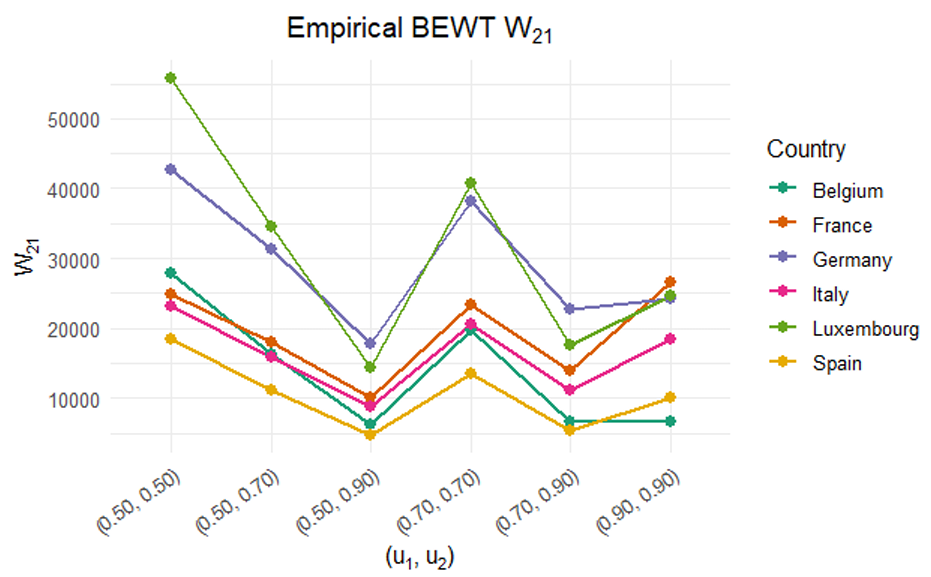}
	\caption{Empirical $W_{21}(0.5, u_2)$ curves across the six countries}
	\label{fig3}
\end{figure}

\section*{Data Availability Statement}
Data supporting all reported findings in this study are accessed from the European Central Bank (ECB), Eurosystem, by registration in https://astra.ecb.europa.eu/files.
\section*{Research ethics statement}
The data used in this study is obtained from the European Central Bank (ECB), Eurosystem, by registration in https://astra.ecb.europa.eu/files, and ethical approval was not required.

\section*{Acknowledgements}

The second author wish to thank the Cochin University of Science and Technology, India for the financial support.

\bibliographystyle{apalike}
\bibliography{references}
\end{document}